\documentclass[12pt]{article}
\usepackage{color}
\usepackage{verbatim}
\usepackage[polutonikogreek,english]{babel}
\usepackage{booktabs,caption}
\usepackage{multicol,multirow,tabularx}
\usepackage{siunitx}
\usepackage{relsize}
\usepackage{multirow}
\usepackage{hyperref}
\usepackage{float}
\usepackage{lineno}
\usepackage[left=2cm, right=5cm, top=2cm]{geometry}
\usepackage{eurosym}
\usepackage{graphics}
\usepackage{epsfig}
\usepackage{amssymb}
\usepackage{amsmath}
\usepackage{adjustbox}
\usepackage{xcolor}
\usepackage{amsthm}
\usepackage{enumitem}

\newtheoremstyle{break}
  {\topsep}{\topsep}%
  {\itshape}{}%
  {\bfseries}{}%
  {\newline}{}%
\theoremstyle{break}
\newtheorem{theorem}{Theorem}
\newtheorem{corollary}{Corollary}
\newtheorem{remark}{Remark}

\usepackage[square,sort,comma,numbers,compress]{natbib}

\DeclareCaptionFont{ninept}{\fontsize{9pt}{11pt}\selectfont #1}
\usepackage[left=2cm, right=5cm, top=2cm]{geometry}
\begin{document}

\title{Coherent structures and bifurcation analysis \\ in a toxin-driven plant-herbivore model}
\author{Grif\'o Gabriele$^{1,2}$, Valenti Giovanna$^{3}$\\
{$^{1}$\small INDAM - Istituto Nazionale di Alta Matematica ``F. Severi", Piazzale Aldo Moro 5, I-00185 Roma, Italy}\\
{$^{2}$\small Department of Mathematics and Computer Science, University of Palermo,}\\
{\small Via Archirafi 34, I-90123 Palermo, Italy.}\\
{$^{3}$\small Department of Engineering, University of Messina, C.da di Dio, I-98166 Messina, Italy.}\\
{\small $^{(\ast)}$Corresponding author: grifo@altamatematica.it}
}
\date{}
\maketitle

\begin{abstract}

This work investigates how toxin-mediated interactions and directed movements shape the emergence of coherent structures in plant-herbivore systems. The analysis focuses on a two-compartment model enclosing a toxin-dependent functional response and a cross-diffusion term that represents ecologically plausible herbivores' movement towards, or away from, vegetation. Two distinct dynamical regimes arise depending on toxicity strength. Under weak toxicity, the system admits at most one biologically feasible coexistence equilibrium, which may lose stability through a Hopf bifurcation generating small-amplitude temporal oscillations. Under strong toxicity, the nonlinear functional response becomes non-monotonic, allowing for multiple coexistence equilibria and abrupt regime shifts. The influence of cross-diffusion on stability is also examined, identifying the conditions under which Turing instabilities and mixed spatiotemporal patterns occur. Near the corresponding bifurcation thresholds, Stuart-Landau amplitude equations are derived via weakly nonlinear analysis, providing a unified framework for the modulation of oscillatory, stationary, and combined Turing-Hopf modes. Numerical simulations corroborate the theoretical predictions, illustrating transitions from spatially uniform states to oscillations, spatial patterns, and mixed behaviour. Overall, this manuscript highlights how chemical defences, nonlinear feedbacks, and movement strategies jointly determine the emergence, selection, and robustness of coherent structures in plant-herbivore systems.\\

\textbf{Keywords}: coherent structures, Hopf and Turing bifurcations, cross-diffusion, weakly nonlinear analysis, Stuart-Landau equations, toxin-mediated interactions.\\

\end{abstract}

\section{Introduction}
\label{sec:intro}

Coherent structures, such as stationary patterns, oscillatory dynamics, and mixed spatiotemporal regimes, are a hallmark of many natural systems. Their emergence reflects subtle interactions between nonlinear local processes and mechanisms responsible for spatial redistribution, and they are observed across several fields ranging from chemistry and physics to neuroscience and ecology \cite{MurrayII, Hoyle, Cross, Meron2015, Lacitignola2017, Upadhyay, UPADHYAY2019, Grifo2023, Sensi2023, Gargano2024, MANDAL2025, Consolo2023, Farivar2025, Grifo2025III}. In ecological settings, these structures often underpin the organization of landscapes, shaping how vegetation, herbivores, and resources distribute across space and time. Therefore, understanding the principles driving their formation is essential for interpreting the resilience, adaptability, and long-term behaviour of ecological communities.

Vegetation systems provide a compelling example of this phenomenon. Under the influence of environmental stressors, resource limitations, and consumer pressure, plant biomass can exhibit striking spatial patterns, alternating between high- and low-density regions. A rich variety of mathematical models has been developed to explain these structures, ranging from biomass-water formulations to frameworks incorporating toxicity or multi-factor interactions \cite{Klausmeier1999, UPADHYAY2009, Thakur2016, Eigentler2020, Vidiella2021, Iuorio2023, Carter2024, EIGENTLER2024,  Grifo2025, Grifo2025II, Tadej2025sub, vanderVoort2025sub, ABBAS2025, ZHANG2026, DIN2026}. Despite their differences, these models emphasize that nonlinear feedbacks, whether mediated by resource availability, growth constraints, or plant defences, play a central role in determining when homogeneous configurations remain stable or instead give way to complex self-organized patterns.

Furthermore, trophic interactions introduce an additional layer of dynamical richness. Even in simplified settings where the ecosystem is represented by plants and herbivores alone, the interplay between growth, consumption, and defence mechanisms can give rise to multiple dynamical regimes. In this context, the toxin-dependent functional response introduced in \cite{Li2006, LIU2008, FENG2008, Castillo2012, XIANG2021} highlights how plant chemical defences may inhibit herbivory, altering consumption rates and influencing whether coexistence is stable or prone to oscillations. This mechanism is ecologically grounded: many plants produce secondary metabolites that reduce palatability or induce physiological stress in herbivores, thereby shaping grazing dynamics and population persistence.

Moreover, spatial movement further enriches these dynamics. While classical diffusion captures random motion, empirical observations indicate that herbivores often respond directionally to vegetation gradients. This behaviour can be modelled through a cross-diffusion term, which accounts for attraction toward sparse vegetation under resource scarcity or repulsion when high biomass imposes physical or ecological constraints \cite{UPADHYAY2014, Liu2019, LI2019, Xiong2021, Marick2024, SINGHA2025}. The sign and magnitude of these directed movements generate nontrivial spatial feedbacks, providing a mechanism through which homogeneous landscapes may spontaneously develop heterogeneity.

Motivated by these considerations, the goal of the present manuscript is to investigate how the combined effects of the toxin-mediated functional response and cross-diffusion shape plant-herbivore dynamics, with particular emphasis on the emergence of spatial and spatiotemporal patterns induced by cross-diffusion mechanisms. The manuscript is organized as follows. In Section \ref{sec:model}, the toxin-dependent plant-herbivore model proposed in \cite{Li2006, LIU2008, FENG2008, Castillo2012, XIANG2021} is extended by incorporating cross-diffusion in the herbivore population. In Section \ref{sec:coh}, a comprehensive analysis of the homogeneous equilibria, including the bifurcation structure and the conditions leading to diffusion-driven instabilities, is provided. Here, both weak and strong toxicity regimes are taken into account by highlighting the ecological mechanisms underlying Hopf, Turing, and mixed Turing-Hopf bifurcations. Section \ref{sec:close} focuses on the weakly nonlinear analysis, deriving the amplitude equations that describe the modulation and saturation of coherent structures near the bifurcation thresholds in the 1D setting. Finally, the ecological implications and potential directions for future research are addressed in Section \ref{sec:disc}.

\newpage
\section{The model}
\label{sec:model}

To investigate the emergence of spatially coherent structures in plant-herbivore systems under the influence of plant toxicity and movement behavior, a generalization of the toxin-dependent functional response (TDFR) model, originally proposed by Liu et al. \cite{Li2006, LIU2008, FENG2008, Castillo2012, XIANG2021}, is considered. The classical TDFR framework captures the interplay between plant growth, herbivore consumption, and toxin-mediated inhibition, but assumes independent diffusion for each species. Empirical evidence, however, indicates that herbivore movement is not purely random but strongly influenced by vegetation gradients, leading to cross-diffusion effects. These mechanisms are ecologically relevant because herbivores tend to aggregate in regions of high or low plant biomass depending on environmental stress, altering local grazing pressure and promoting spatial pattern formation \cite{Liu2019, Xiong2021, Marick2024, SINGHA2025}.

According to these assumptions, the model describing the evolution of plant biomass $B(\mathbf{x},t)$ and herbivore density $H(\mathbf{x},t)$ at the position $\mathbf{x} = (x,y) \in \mathbb{R}^2$ and time $t$ is given by 

\begin{equation}
\label{eq:dimmod}
\begin{cases}
\displaystyle \frac{\partial B}{\partial t} = D_B \Delta B + r B \Big( 1 - \frac{B}{k} \Big) - \frac{e B}{1+ h e B} \Bigg( 1 - \frac{e B}{4G (1 + h e B)} \Bigg) H, \\[6pt]
\displaystyle \frac{\partial H}{\partial t} = D_H \Delta H + D_{HB} \Delta B - D H + A \frac{e B}{1+ h e B} \Bigg( 1 - \frac{e B}{4G (1 + h e B)} \Bigg) H,
\end{cases}
\end{equation}
where $\Delta$ is the Laplacian operator. The terms $D_B \Delta B$ and $D_H \Delta H$ represent plant diffusion (seed dispersal or vegetative spread) and random herbivore movements, respectively. The additional term $D_{HB}\Delta B$ introduces cross-diffusion, modeling herbivore displacement toward regions of higher \mbox{($D_{HB}<0$)} or lower \mbox{($D_{HB}>0$)} plant density. This coupling is crucial for capturing realistic spatial heterogeneity and pattern onset. Moreover, vegetation dynamics combine diffusion with logistic growth $rB(1-B/k)$, where $r$ is the intrinsic growth rate and $k$ the carrying capacity. The impact of plant toxicity on the herbivore-plant interaction is described by the toxin-modified Holling type-II functional response

\begin{equation}
C(B) = \frac{eB}{1+heB}\Bigg(1-\frac{eB}{4G(1+heB)}\Bigg),
\label{eq:response}
\end{equation}
where $e$ is the encounter rate, $h$ the handling time, and $G$ the maximum tolerable toxin intake. In particular, the first fraction captures resource saturation, while the second one introduces inhibition by plant toxins, reducing intake at high biomass. Herbivore growth depends on the same response function, scaled by conversion efficiency $A$, which translates consumed biomass into herbivore growth. This formulation reflects the dual role of plant density: promoting herbivore growth at intermediate levels and inducing stress at high densities due to toxin accumulation. 
Figure~\ref{fig:graphical} summarizes the ecological interactions: plant biomass grows logistically and is reduced by herbivore grazing through a toxin-modified response, while herbivore dynamics depend on intake, mortality and toxicity effects. Positive and negative signs indicate reinforcing or inhibitory effects, emphasizing the dual role of plant density in promoting herbivore growth at intermediate levels and inducing stress at high densities.

\begin{figure}[h!]
\centering
\includegraphics[width=0.5\textwidth]{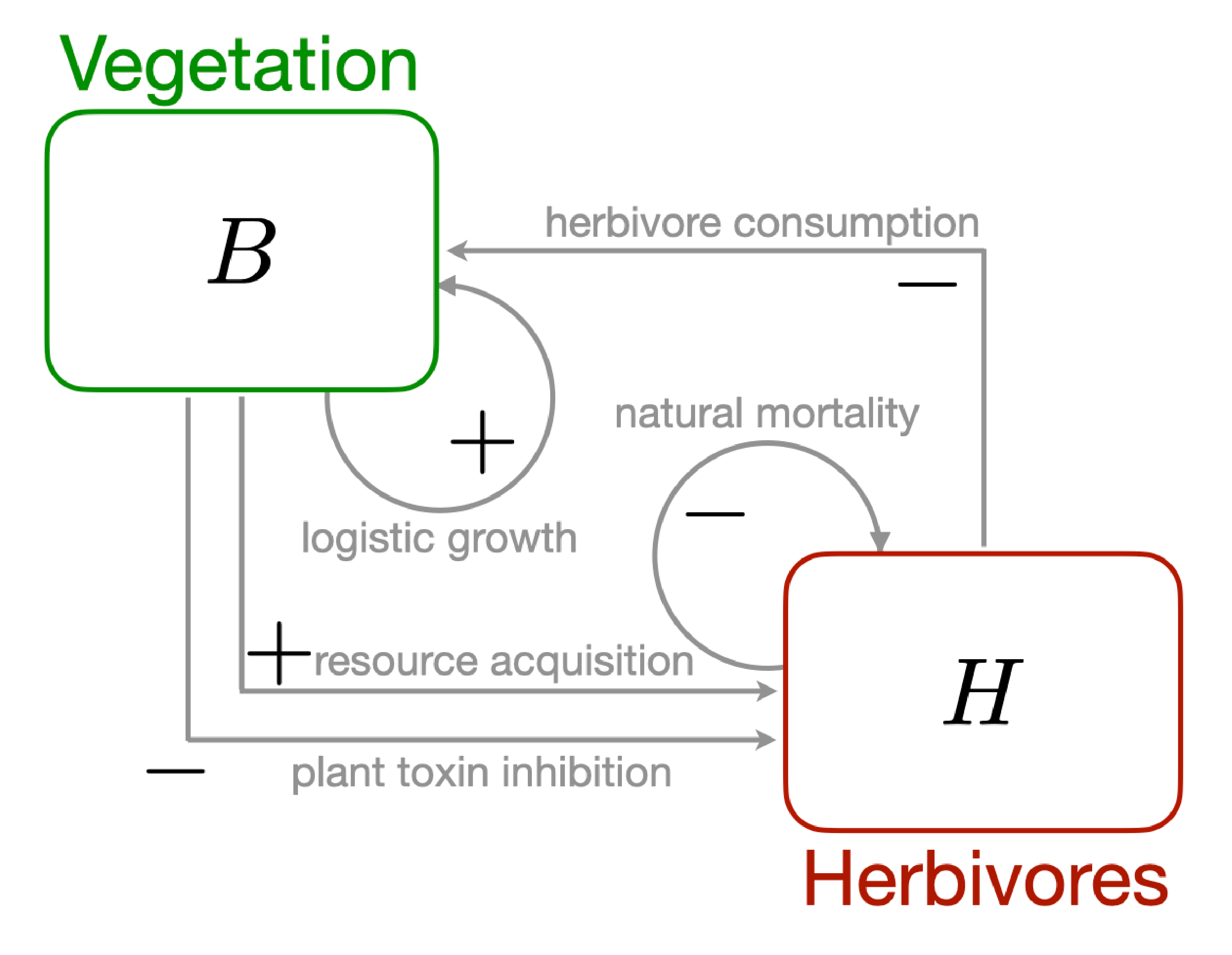}
\caption{Graphical representation of the plant-herbivore system with toxin-mediated interactions. Positive and negative feedbacks are indicated by ``+" and ``-" signs.}
\label{fig:graphical}
\end{figure}

\begin{table}[h!]
\renewcommand{\arraystretch}{1.1}
\centering
\begin{tabular}{c l l}
\hline
Parameter & Description & Units \\
\hline
$r$ & Intrinsic plant growth rate & day$^{-1}$ \\
$k$ & Plant carrying capacity & kg m$^{-2}$ \\
$e$ & Encounter rate & m$^2$ kg$^{-1}$ day$^{-1}$ \\
$h$ & Handling time & day \\
$G$ & Max tolerable toxin intake & day$^{-1}$ \\
$A$ & Conversion efficiency & -- \\
$D$ & Herbivore mortality rate & day$^{-1}$ \\
$D_B$ & Plant diffusion coefficient & m$^2$ day$^{-1}$ \\
$D_H$ & Herbivore diffusion coefficient & m$^2$ day$^{-1}$ \\
$D_{HB}$ & Herbivore cross-diffusion coefficient & m$^2$ day$^{-1}$ \\
\hline
\end{tabular}
\caption{Parameters of the dimensional model \eqref{eq:dimmod}.}
\label{tab:parameters}
\end{table}

To reduce the number of parameters, collected in Table~\ref{tab:parameters}, the following scaling is introduced

\begin{equation}
\begin{array}{c c c c c}
t = \frac{1}{r} \hat{t}, & \mathbf{x} = \sqrt{\frac{r}{D_B}} \hat{\mathbf{x}}, & B = \frac{1}{he} \hat{B}, & H = \frac{A}{he} \hat{H}, & \alpha = ehk, \medskip \\
m = \frac{A}{hr}, & \mu = \frac{1}{4hG}, & d = \frac{D_H}{D_B}, & d_{HB} = \frac{D_{HB}}{A D_B}, & \theta = \frac{D}{r},
\end{array}
\end{equation}
where $\mu \in \mathbb{R}^+$ measures toxicity strength, $d_{HB} \in \mathbb{R}$ the intensity of cross-diffusion, and $d \in \mathbb{R}^+$ the relative herbivore mobility. Moreover, the parameter $\alpha \in \mathbb{R}^+$ denotes the scaled carrying capacity, $m \in \mathbb{R}^+$ the conversion efficiency, and $\theta \in \mathbb{R}^+$ the normalized herbivore mortality rate. Note that, to reflect realistic levels of plant chemical defence \cite{Li2006, LIU2008, FENG2008, Castillo2012, XIANG2021}, the toxicity parameter $\mu$ varies within $\mu \in [1/4,1]$. Therefore, dropping ``\hspace{0.2cm}$\widehat{}$\hspace{0.2cm}" for simplicity, the rescaled model reads as

\begin{equation}
\label{eq:adimmod}
\begin{cases}
\frac{\partial B}{\partial t} = \Delta B + B \left( 1 - \frac{B}{\alpha} \right) - \frac{m B}{1+ B} \left( 1 - \frac{\mu B}{1 + B} \right)H, \medskip \\
\frac{\partial H}{\partial t} = d \Delta H + d_{HB} \Delta B - \theta H + \frac{m B}{1+ B} \left( 1 - \frac{\mu B}{1 + B} \right)H.
\end{cases}
\end{equation}

The aim of this manuscript is to analyze how toxin-mediated responses, cross-diffusion, and nonlinear interactions drive spatial and temporal complexity in plant-herbivore systems. Specifically, the objectives are: (i) to characterize steady states and their stability under varying $\mu$, $\theta$, and $d_{HB}$; (ii) to identify bifurcation structures, including Hopf, Turing, and mixed-mode bifurcations, that govern transitions from homogeneous equilibria to patterned states; (iii) to determine parameter regimes where stationary spatial patterns and spatiotemporal oscillations emerge; and (iv) to explore complex dynamics, such as bistability, which may lead to abrupt shifts between coexistence and herbivore extinction.

\section{Formation of coherent structures}
\label{sec:coh}

This section addresses the emergence of spatially organized patterns in the plant-herbivore system \eqref{eq:adimmod} by analyzing the stability of homogeneous equilibria and identifying the bifurcation mechanisms that lead to their destabilization. Starting from the linearized formulation of the reaction-diffusion model \eqref{eq:adimmod}, the stationary states are characterized and the conditions under which they lose stability through two distinct routes are determined: Hopf bifurcation, associated with temporal oscillations, and Turing instability, responsible for the onset of spatial heterogeneity.

To this aim, the model \eqref{eq:adimmod} is recast in matrix form

\begin{equation}
\mathbf{U}_t = \text{M} \Delta \mathbf{U} + \mathbf{N}(\mathbf{U}),
\label{eq:modmat}
\end{equation}
where the field vector $\mathbf{U}$, diffusion matrix $\text{M}$, and nonlinear kinetic vector $\mathbf{N}(\mathbf{U})$ are given by

\begin{equation}
\mathbf{U} =
\begin{bmatrix}
B \\[4pt]
H
\end{bmatrix}, \quad
\text{M} =
\begin{bmatrix}
1 & 0 \\[4pt]
d_{HB} & d
\end{bmatrix}, \quad
\mathbf{N}(\mathbf{U}) =
\begin{bmatrix}
f(B,H) \\[4pt]
g(B,H)
\end{bmatrix},
\label{eq:modmat_specifico}
\end{equation}
with

\begin{equation}
\begin{aligned}
f(B,H) &:= B\Big(1-\frac{B}{\alpha}\Big) - m \overbrace{\frac{B}{1+B}\Big(1-\frac{\mu B}{1+B}\Big)}^{C(B)}H, \\[4pt]
g(B,H) &:= \Bigg[m \underbrace{\frac{B}{1+B}\Big(1-\frac{\mu B}{1+B}\Big)}_{C(B)} - \theta\Bigg]H.
\end{aligned}
\label{eq:func_adim}
\end{equation}

Let us now look for the spatially homogeneous steady states $\mathbf{U}^\ast = (B^\ast, H^\ast)$ satisfying $\mathbf{N}(\mathbf{U}^\ast) = \mathbf{0}$. Depending on the toxicity parameter $\mu$ relative to the threshold $\mu_{ex} = m/(4\theta)$, the system admits the following equilibria

\begin{equation}
\begin{array}{l c l}
\text{for} \hspace{0.3cm} \mu < \mu_{ex} & \Rightarrow & \mathbf{U}_1 = (0, 0), \hspace{0.2cm} \mathbf{U}_2 = (\alpha, 0), \hspace{0.2cm} \mathbf{U}_{3,4} = (B_{3,4}, H \left( B_{3,4} \right)), \medskip \\
\text{for} \hspace{0.3cm} \mu = \mu_{ex} & \Rightarrow & \mathbf{U}_1 = (0, 0), \hspace{0.2cm} \mathbf{U}_2 = (\alpha, 0), \hspace{0.2cm} \mathbf{U}_{3}=\mathbf{U}_{4}, \medskip \\
\text{for} \hspace{0.3cm} \mu > \mu_{ex} & \Rightarrow & \mathbf{U}_1 = (0, 0), \hspace{0.2cm} \mathbf{U}_2 = (\alpha, 0),
\end{array}
\label{eq:steadystates}
\end{equation}
where 

\begin{equation}
B_{3,4} = \frac{m - 2 \theta \mp \sqrt{m(m - 4 \mu \theta)}}{2(\theta - m + \mu m)} \hspace{0.5cm} \text{and} \hspace{0.5cm} H \left( B_{3,4} \right)= \frac{B_{3,4} \left(\alpha - B_{3,4}\right)}{\alpha m C(B_{3,4})}. 
\label{eq:B34}
\end{equation}
Note that the steady states $\mathbf{U}_1$ and $\mathbf{U}_2$ correspond to extinction and vegetation-only states, while $\mathbf{U}_{3,4}$ represent coexistence configurations whose feasibility depends on toxicity, conversion efficiency, and mortality. These equilibria are biologically admissible only under $0 < B_{3,4} < \alpha$, ensuring vegetation density remains below carrying capacity without exceeding its environmental limit. This constraint reflects ecological feasibility and will be further analyzed in Section~\ref{subsec:temporal}.

To assess stability, a small perturbation around $\mathbf{U}^\ast$ is considered in the form of a Fourier mode

\begin{equation}
\mathbf{U}=\mathbf{U}^\ast+\widetilde{\mathbf{U}}e^{\omega t+i\mathbf{k}\cdot\mathbf{x}},
\label{eq:fourier}
\end{equation}
where $\omega$ is the growth rate and $\mathbf{k}$ the wavenumber. Substituting \eqref{eq:fourier} into \eqref{eq:modmat} yields the linearized problem

\begin{equation}
\big[\omega\text{I}+\text{M}k^2-\mathcal{L}^\ast\big]\widetilde{\mathbf{U}}=\mathbf{0},
\label{eq:linprob}
\end{equation}
being $\text{I}$ the identity matrix, $k=| \mathbf{k}|$, and $\mathcal{L}^\ast=(\nabla_{\mathbf{U}}\mathbf{N})$. Nontrivial solutions exist if the dispersion relation holds

\begin{equation}
\omega^2-\big[g_H^\ast+f_B^\ast-(d+1)k^2\big]\omega+f_B^\ast g_H^\ast-f_H^\ast g_B^\ast-\big(g_H^\ast+df_B^\ast-d_{HB}f_H^\ast\big)k^2+dk^4=0,
\label{eq:disprel}
\end{equation}
where the asterisk denotes the evaluation of the functions at $\mathbf{U}^\ast$ and subscripts indicate partial derivatives.

\subsection{Hopf instability and temporal oscillations}
\label{subsec:temporal}

Once diffusion is neglected ($k = 0$), the stability conditions of the equilibria \eqref{eq:steadystates} are given by

\begin{equation}
\begin{cases}
\text{tr}(\mathcal{L}^\ast) = f_B^\ast + g_H^\ast < 0, \medskip \\
\text{det}(\mathcal{L}^\ast) = f_B^\ast g_H^\ast - g_B^\ast f_H^\ast > 0,
\end{cases}
\label{eq:cond_stab}
\end{equation}
By evaluating the above quantities at the equilibria $\mathbf{U}_1$ and $\mathbf{U}_2$, the conditions \eqref{eq:cond_stab} reduce to

$$
\begin{cases}
\text{tr}(\mathcal{L}^\ast) (\mathbf{U}_1) = 1 - \theta < 0 \medskip \\
\text{det}(\mathcal{L}^\ast) (\mathbf{U}_1) = - \theta >0
\end{cases},
\hspace{1cm}
\begin{cases}
\text{tr}(\mathcal{L}^\ast) (\mathbf{U}_2) = m C(\alpha) - \theta - 1 < 0\medskip \\
\text{det}(\mathcal{L}^\ast) (\mathbf{U}_2) = \theta - m C (\alpha) > 0
\end{cases}
$$
so that, being $\theta > 0$, it follows that $\mathbf{U}_1$ is always unstable, whereas the stability of $\mathbf{U}_2$ depends on the model parameters. More precisely, it is stable under homogeneous perturbations iff

\begin{equation}
C(\alpha) = \frac{\alpha}{(1+\alpha)^2} \left[1 + \alpha \left(1 - \mu\right) \right]  < \frac{\theta}{m}
\hspace{0.5cm}
\Leftrightarrow
\hspace{0.5cm}
\mu > \mu^\ast, \quad \text{with} \quad
\mu^\ast = \frac{(1+\alpha)\big[m\alpha - \theta(1+\alpha)\big]}{m\alpha^2}.
\end{equation}
Moreover, the biological admissibility of the coexistence equilibria $\mathbf{U}_{3,4}$ depends critically on the qualitative behavior of the toxin-modified functional response $C(B)$ introduced in \eqref{eq:func_adim}. On the one hand, for $\mu \in [1/4, 1/2]$, inhibition is weak and $C(B)$ is monotonic, approaching $C(B) = m(1-\mu)$ as $B \to \infty$ (see Figure~\ref{fig:C_behavior}(a)). On the other hand, for $\mu \in [1/2, 1]$, the function becomes unimodal, reaching a maximum at $B_M = 1/(2\mu - 1)$ before decreasing toward $C(B) = m(1-\mu)$ (see Figure~\ref{fig:C_behavior}(b)). This transition reflects an ecological trade-off: higher plant density initially benefits herbivores, but excessive biomass leads to toxin accumulation and reduced consumption efficiency. Therefore, the subsequent analysis distinguishes between weak and strong toxicity regimes, as they lead to markedly different dynamical scenarios: unimodal responses allow multiple coexistence equilibria and complex dynamics, whereas monotonic responses lead to simpler configurations.

\begin{figure}[t!]
\centering
\includegraphics[width=0.9\textwidth]{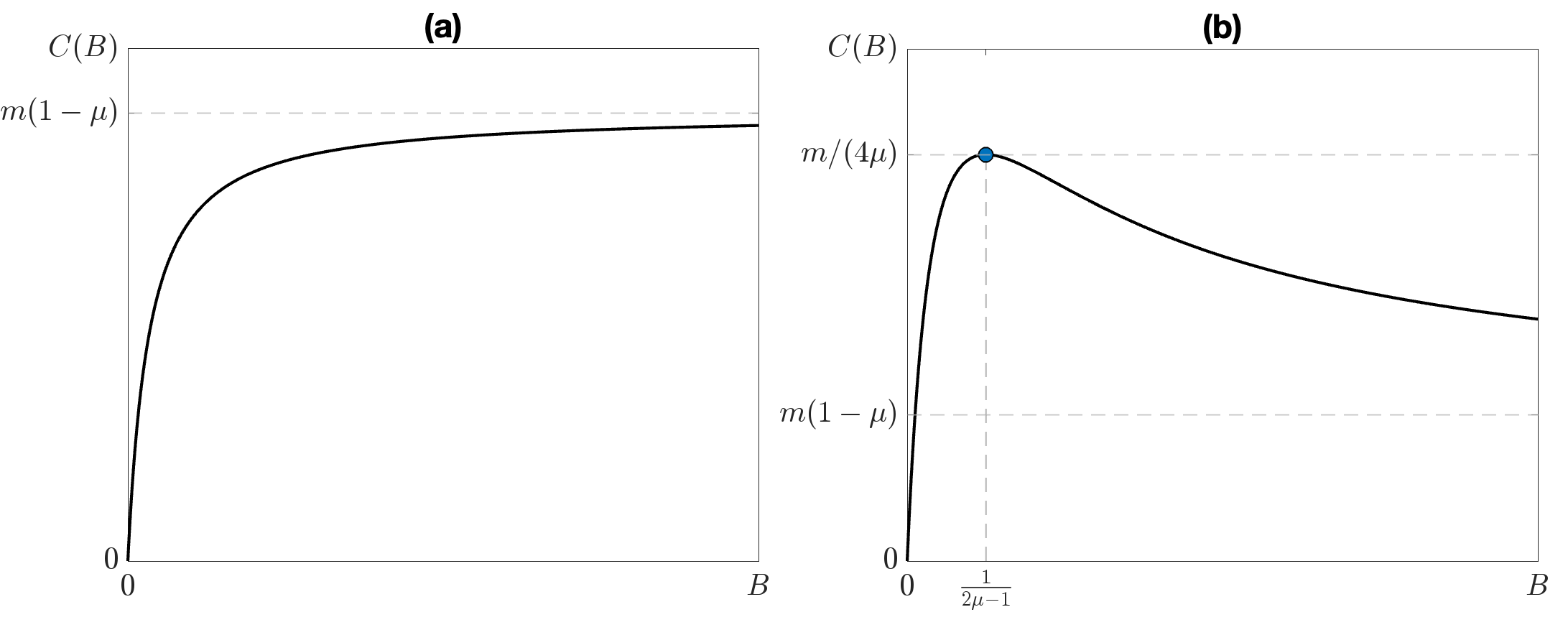}
\caption{Qualitative behavior of the toxin-modified functional response $C(B)$ for two toxicity regimes: (a) low toxicity $\mu \in [1/4,1/2]$, monotonic increasing; (b) high toxicity $\mu \in [1/2,1]$, unimodal with a maximum.}
\label{fig:C_behavior}
\end{figure}

\subsubsection{Weak toxicity regime}
\label{sec:LSA_weak}

When $\mu \in [1/4, 1/2]$, the inhibitory effect of plant toxins is relatively weak, and the functional response $C(B)$ remains monotonic. Specifically, $C'(B) > 0$ for $B > 0$, implying that herbivore intake increases with vegetation density and approaches an asymptote $C(B) = m(1-\mu)$ as $B \to \infty$ (see Fig. \ref{fig:C_behavior}(a)). Ecologically, this scenario reflects environments where chemical defenses are insufficient to significantly reduce consumption at high biomass levels.

Let us now investigate the biological feasibility of coexistence equilibria to exclude non-physical configurations such as negative biomass or vegetation exceeding carrying capacity $\alpha$. Results are collected in the following Theorem.

\begin{theorem}[Existence and stability of the coexistence equilibrium under weak toxicity]
\label{th:weak}
Let us consider system \eqref{eq:adimmod} with $\mu \in [1/4,1/2]$. Then, a unique positive coexistence equilibrium $\mathbf{U}_3=(B_3,H_3)$ with $B_3 \in ]0, \alpha[$ exists iff $\mu < \mu^\ast$. Moreover, it is locally asymptotically stable with respect to homogeneous perturbations iff $\theta>\widetilde{\theta}$, being

\begin{equation}
\widetilde{\theta} = \frac{m \left \{ \left[ 1 + \alpha \left(1-\mu \right) \right]\left[ 1 + \alpha \left(1 + 5 \mu \right) - 4 \mu \right] + \left[ \alpha \left( 3 \mu - 1\right)-1\right] \sqrt{\left[1+\alpha \left(1-\mu \right) \right] \left[ \alpha \left( 1 - \mu \right) + 1 + 8 \mu \right]} \right \} }{8 \mu \left(1+\alpha \right)^2}.
\label{eq:th2_stab}
\end{equation}
\end{theorem}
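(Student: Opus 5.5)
Existence and stability can be handled in two separate steps, both driven by the monotonicity of $C$ for $\mu\in[1/4,1/2]$.

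\textbf{Existence.} A coexistence equilibrium with $H^\ast>0$ must satisfy $g=0$, i.e. $mC(B)=\theta$, together with $H=B(1-B/\alpha)/(mC(B))$. The requirement $H>0$ is then equivalent to $B\in\,]0,\alpha[$.

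I will first establish the monotonicity of $C$. Writing $u=B/(1+B)\in[0,1)$, we have $C=u(1-\mu u)$, so $dC/du=1-2\mu u>0$ for $u<1/(2\mu)$. When $\mu\le 1/2$ this holds on all of $[0,1)$. Since $u$ is increasing in $B$, $C$ is strictly increasing from $C(0)=0$ to $C(\infty)=1-\mu$.

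Hence $mC(B)=\theta$ has at most one positive root, and that root is $B_3$ in \eqref{eq:B34}; the other algebraic root $B_4$ corresponds to $u>1/(2\mu)\ge 1$, i.e. negative or infinite $B$, and I will check this explicitly. By monotonicity, the root lies in $]0,\alpha[$ iff $mC(\alpha)>\theta$. From the computation preceding the theorem, this is exactly $\mu<\mu^\ast$, the opposite of the stability condition for $\mathbf U_2$. This gives the first claim.

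\textbf{Stability.} I will compute the Jacobian at $\mathbf U_3$ using $mC(B_3)=\theta$. This gives
$$g_H^\ast=0,\qquad g_B^\ast=mC'(B_3)H_3>0,\qquad f_H^\ast=-\theta<0,$$
so $\det\mathcal L^\ast=\theta\, mC'(B_3)H_3>0$ automatically. Stability is therefore equivalent to
$$\operatorname{tr}\mathcal L^\ast=f_B^\ast=1-\frac{2B_3}{\alpha}-mC'(B_3)H_3<0.$$
Substituting $H_3$ and $mC=\theta$ yields
$$f_B^\ast=1-\frac{2B_3}{\alpha}-\Big(1-\frac{B_3}{\alpha}\Big)\frac{B_3C'(B_3)}{C(B_3)}.$$
In the variable $u$, the logarithmic derivative is
$$\frac{B\,C'}{C}=\frac{(1-2\mu u)(1-u)}{1-\mu u}.$$
So the condition $f_B^\ast<0$ becomes a rational inequality in $u_3=B_3/(1+B_3)$ alone, with $\alpha,\mu$ as parameters. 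I will clear denominators, which are positive, to obtain a polynomial inequality $P(u_3)<0$, expected to be quadratic in $u_3$.

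Next I will locate the Hopf threshold. I solve $P(u)=0$ for the critical value $u_c$ on the admissible branch, which produces the square root $\sqrt{[1+\alpha(1-\mu)][\alpha(1-\mu)+1+8\mu]}$. The threshold itself is $\widetilde\theta=mC(u_c)=m\,u_c(1-\mu u_c)$.

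Finally I need the direction of the inequality. Because $\theta\mapsto u_3$ is increasing, larger $\theta$ means larger $B_3$, closer to $\alpha$. At $B_3\to\alpha$ we have $f_B^\ast\to-1<0$, so $P$ changes sign exactly once on the admissible interval. This should show that stability holds iff $u_3>u_c$, iff $\theta>\widetilde\theta$.

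\textbf{Main obstacle.} The hardest part is the algebra: choosing the correct root of $P$ and verifying it lies in the admissible range (including the case $u_c\le0$, where the condition holds trivially), then simplifying $m\,u_c(1-\mu u_c)$ into the stated form of $\widetilde\theta$. I would first attempt this with a computer-algebra check. I would then verify the single sign change by showing that $P$ evaluated at $u=0$ and at $u=\alpha/(1+\alpha)$ have opposite signs, or by a discriminant argument.
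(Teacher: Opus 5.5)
Your proposal is correct and follows essentially the paper's route. Both arguments get existence from $mC(\alpha)>\theta\iff\mu<\mu^\ast$, get positivity of the determinant from $C'>0$, and reduce the trace condition to a quadratic inequality in the equilibrium biomass (the paper's $2(1-\mu)B_3^2+[3-\alpha(1-\mu)]B_3+1-\alpha(1+\mu)>0$), whose translation into $\theta>\widetilde{\theta}$ the paper only asserts. Your substitution $u=B/(1+B)$ is the cleaner choice for that last step: after dividing out the factor $u$, the reduced inequality $2\mu(1+\alpha)u^2-[1+\alpha(1+3\mu)]u+\alpha(1+\mu)-1<0$ has positive leading coefficient and is negative at $u=\alpha/(1+\alpha)$, which gives the single sign change you flagged (stability iff $u_3$ exceeds the smaller root $u_-$), and $m\,u_-(1-\mu u_-)$ simplifies exactly to the stated $\widetilde{\theta}$.
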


\begin{proof}
First of all, let us notice that the coexistence equilibria $\mathbf{U}_{3,4}$ are biologically admissible iff  $0<B_{3,4}<\alpha$. On the one hand, it can be observed that $B_3$ is always positive and real for $\mu < \mu_{ex}$, whereas it is bounded by $\alpha$ iff 

$$
\frac{\alpha}{(1+\alpha)^2} \left[1 + \alpha \left(1 - \mu\right) \right] > \frac{\theta}{m},
$$
namely iff $\mu < \mu^\ast$ is satisfied. Therefore, the coexistence equilibrium $\mathbf{U}_3$ exists for \mbox{$\mu < \min\{\mu_{ex},\mu^\ast\}=\mu^\ast$}. On the other hand, it is easy to ascertain that $B_4<0$ for any value of model parameters, leading to the $\mathbf{U}_4$ biological meaninglessness. 

For what concern the stability character of $\mathbf{U}_3$, the conditions \eqref{eq:cond_stab} are reduced to

\begin{equation}
\begin{cases}
\theta H'(B_3) < 0, \medskip \\
m \theta C'(B_3) H(B_3) > 0.
\end{cases}
\label{eq:th2_cases}
\end{equation}
Since in the weak toxicity regime $C'(B)>0\hspace{0.2cm} \forall B$, the inequality \eqref{eq:th2_cases}$_2$ is always satisfied, so that the stability of $\mathbf{U}_3$ depends on the sign of $H'(B_3)$. More precisely, the inequality \eqref{eq:th2_cases}$_1$ yields to

\begin{equation}
2(1-\mu)B_3^2+[3-\alpha(1-\mu)]B_3+1-\alpha(1+\mu)>0
\end{equation}
which is satisfied for $\theta>\widetilde{\theta}$.
\end{proof}

\begin{remark}
From an ecological viewpoint, the constraint $\mu < \mu_{ex}$ reflects that toxicity cannot be excessively strong relative to conversion efficiency and mortality; otherwise, the physiological stress induced by toxins would prevent herbivore persistence, making coexistence impossible. Similarly, $\mu < \mu^\ast$ indicates that weak toxicity allows herbivores to exploit vegetation without severe inhibition, sustaining coexistence. As $\mu$ approaches $\mu^\ast$, toxins dominate, reducing intake and driving the system toward extinction. On the other hand, $\widetilde{\theta}$ represents a critical mortality level separating different regimes. If $\theta < \widetilde{\theta}$, the population grows rapidly, increasing grazing pressure and destabilizing vegetation, which may trigger oscillations or even collapse. For $\theta > \widetilde{\theta}$, mortality is sufficiently high to prevent uncontrolled herbivore growth, limiting grazing pressure and restoring stability. Excessive mortality, however, leads to herbivore extinction and a vegetation-only state. Stability thus emerges within an intermediate range balancing resource availability and physiological constraints.
\end{remark}

\noindent When the herbivore mortality rate $\theta$ approaches a critical threshold, the coexistence equilibrium $\mathbf{U}_3$ may lose stability. After this transition, the system no longer guarantees convergence to the original steady state and can exhibit qualitatively different behaviors. Depending on the underlying bifurcation mechanism, trajectories may converge to another stable equilibrium or evolve toward a family of limit cycles, provided by a Hopf bifurcation. Therefore, let us now investigate such a scenario.

\begin{corollary}[Hopf bifurcation under weak toxicity]
\label{cr:hopf_weak}
Under the assumptions of Theorem~\ref{th:weak}, there exists a critical value $\theta=\theta_{\text{hopf}}$ such that the equilibrium $\mathbf{U}_3$ undergoes a Hopf bifurcation. For $\theta$ near $\theta_{\text{hopf}}$, the system admits a family of small-amplitude periodic solutions.
\end{corollary}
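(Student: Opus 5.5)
The plan is to apply the classical Hopf bifurcation theorem for planar systems to the kinetics $\mathbf{U}_t=\mathbf{N}(\mathbf{U})$ at $\mathbf{U}_3$, using $\theta$ as the bifurcation parameter. By Theorem~\ref{th:weak}, along the admissible branch ($\mu<\mu^\ast$) we have $\det(\mathcal{L}^\ast)(\mathbf{U}_3)=m\theta C'(B_3)H(B_3)>0$ for every $\theta$, because $C'>0$ in the weak toxicity regime and $H(B_3)>0$. The sign of the trace, $\mathrm{tr}(\mathcal{L}^\ast)=\theta H'(B_3)$, changes exactly when $\theta$ crosses $\widetilde{\theta}$. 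We therefore set $\theta_{\text{hopf}}:=\widetilde{\theta}$, given by \eqref{eq:th2_stab}. We must first check that $\widetilde{\theta}$ lies in the admissibility window, i.e.\ that $\mu<\mu^\ast$ still holds at $\theta=\widetilde{\theta}$. Since $\mu^\ast$ decreases in $\theta$, this is a single inequality between explicit expressions. If it fails for some $(\alpha,m,\mu)$, the corollary should be read as restricted to parameters where it holds.

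At $\theta=\theta_{\text{hopf}}$ the Jacobian has trace zero and positive determinant. Its eigenvalues are therefore $\pm i\omega_0$, with $\omega_0=\sqrt{m\theta_{\text{hopf}}C'(B_3)H(B_3)}>0$, and no eigenvalue is zero. The next step is transversality. For $\theta$ near $\theta_{\text{hopf}}$ the eigenvalues are $\lambda(\theta)=\tfrac12\mathrm{tr}\pm i\sqrt{\det-\mathrm{tr}^2/4}$, so $\mathrm{Re}\,\lambda=\tfrac12\theta H'(B_3(\theta))$ and
\[
\frac{d}{d\theta}\mathrm{Re}\,\lambda\Big|_{\theta_{\text{hopf}}}=\frac{\theta_{\text{hopf}}}{2}\,H''(B_3)\,\frac{dB_3}{d\theta}.
\]
Here we used $H'(B_3)=0$ at the threshold. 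Differentiating $mC(B_3)=\theta$ gives $dB_3/d\theta=1/(mC'(B_3))>0$. So transversality reduces to $H''(B_3)\neq 0$ at the critical point. This is equivalent to $\widetilde{\theta}$ being a simple root of the quadratic inequality from the proof of Theorem~\ref{th:weak}, namely
\[
2(1-\mu)B_3^2+[3-\alpha(1-\mu)]B_3+1-\alpha(1+\mu)=0 .
\]
Since $B_3$ is strictly monotone in $\theta$ and the quadratic has a simple positive root (its discriminant is positive, as the square root in \eqref{eq:th2_stab} shows), the sign change of $\mathrm{tr}$ is strict. Hence the crossing speed is nonzero.

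With a simple pair of purely imaginary eigenvalues crossing the axis with nonzero speed, the Hopf theorem yields a one-parameter family of periodic orbits bifurcating from $\mathbf{U}_3$. Their amplitude is $O(\sqrt{|\theta-\theta_{\text{hopf}}|})$ and their period is near $2\pi/\omega_0$. The main obstacle is nondegeneracy: the first Lyapunov coefficient $\ell_1$ must be nonzero to guarantee the standard square-root branch and to decide super- versus subcriticality. I would compute $\ell_1$ by the standard formula after moving $\mathbf{U}_3$ to the origin and putting the Jacobian in real Jordan form, using the second and third derivatives of $f$ and $g$ (only $C''$ and $C'''$ enter nontrivially). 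Alternatively, I would defer it to the Stuart--Landau coefficient derived in Section~\ref{sec:close}. If $\ell_1$ cannot be shown nonzero in closed form, the existence of the small-amplitude family still follows from the Hopf theorem in its general (possibly degenerate) form via the Lyapunov center/Poincaré-map argument. Only the direction of bifurcation would then remain parameter-dependent.
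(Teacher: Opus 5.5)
Your route is the paper's: $\text{tr}(\mathcal{L}^\ast)=\theta H'(B_3)$ vanishes at $\widetilde\theta$, the determinant $m\theta C'(B_3)H(B_3)$ stays positive, and transversality comes from $\theta H''(B_3)\,\partial B_3/\partial\theta$ with $\partial B_3/\partial\theta=1/(mC'(B_3))>0$. Your factor $\tfrac12$, and your derivation of $H''\neq 0$ from simplicity of the root, are cleaner than the paper's version.

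The gap is the sentence ``the quadratic has a simple positive root (its discriminant is positive)''. A positive discriminant gives two distinct real roots, not a positive one. Write $q(B)=2(1-\mu)B^2+[3-\alpha(1-\mu)]B+1-\alpha(1+\mu)$. Then $q(-1)=-2\mu(1+\alpha)<0$ and $q(\alpha)=(1-\mu)\alpha^2+(2-\mu)\alpha+1>0$, so the larger root $B^\ast$ lies in $(0,\alpha)$ iff $q(0)=1-\alpha(1+\mu)<0$. If instead $\alpha(1+\mu)\le 1$, then $q>0$ on $(0,\alpha)$. Hence $H'(B_3)<0$ for every admissible $\theta$, and $\mathbf{U}_3$ never loses stability. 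In that case \eqref{eq:th2_stab} returns $\widetilde\theta=mC(B^\ast)\le 0$; for example, $\alpha=1/2$, $\mu=1/4$ gives $\widetilde\theta\approx-0.19\,m$. So the hypotheses of Theorem~\ref{th:weak} alone do not produce a Hopf point. Consistently, the thresholds $\alpha=2/3$ and $\alpha=4/5$ in the paper's own weak-toxicity numerics are exactly $\alpha=1/(1+\mu)$ at $\mu=1/2$ and $\mu=1/4$.

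Your proposed admissibility check does not detect this. The condition $\mu<\mu^\ast$ at $\theta=\widetilde\theta$ is equivalent to $\widetilde\theta<mC(\alpha)$. That holds trivially when $\widetilde\theta\le 0$, and automatically when $B^\ast\in(0,\alpha)$ because $C$ is increasing. The condition actually needed is $\widetilde\theta>0$, i.e.\ $\alpha(1+\mu)>1$. The paper's proof simply asserts it (``since $\widetilde\theta>0$'') even though it is not among the corollary's hypotheses.

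Once you add that hypothesis, your argument closes. You get $\theta_{\text{hopf}}=mC(B^\ast)\in(0,mC(\alpha))$, and from
\[
H'(B)=-\frac{H(B)\,q(B)}{(\alpha-B)(1+B)\,[1+(1-\mu)B]}
\]
you obtain $H''(B^\ast)<0$, since $q'(B^\ast)>0$ at the larger simple root. This gives a strictly negative crossing speed. Your remark that $\ell_1\neq 0$ matters only for the square-root amplitude scaling and the direction of bifurcation, not for existence of the periodic family, is correct; the paper's proof does not address it.
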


\begin{proof}
According to Theorem~\ref{th:weak}, $\mathbf{U}_3$ changes its stability character as a function of $\theta$. Indeed, when $f_B(\mathbf{U}_3)=0$, the trace \eqref{eq:cond_stab}$_1$ vanishes so that the characteristic equation admits a pair of purely imaginary conjugate roots at $\theta=\widetilde{\theta}$ and the equilibrium loses stability. Moreover, since $\widetilde{\theta}>0$, $H''(B_3)\big|_{\widetilde{\theta}}<0$ being $B_3$ a local maximum for $H$ at $\theta=\widetilde{\theta}$, and
$\left.\partial B_3/\partial\theta\right|_{\widetilde{\theta}}>0$, the transversality condition

\begin{equation}
\left.\frac{d\text{Re}\{\lambda(\theta)\}}{d\theta}\right|_{\widetilde{\theta}} = \left.\big[\theta H''(B_3)\,\partial B_3/\partial\theta\big]\right|_{\widetilde{\theta}} < 0
\end{equation}
is always fulfilled, ensuring that the eigenvalues cross the imaginary axis with non-zero speed.

Therefore, the system undergoes a Hopf bifurcation at $\widetilde{\theta} = \theta_{\text{hopf}}$ via a spatially uniform periodic solution whose angular frequency is given by 

\begin{equation}
\omega_H = \sqrt{m \theta_{hopf} C'(B_3) H_3}.
\end{equation}
\end{proof}

\noindent Summarising, under weak toxicity, the system admits at most a coexistence equilibrium. Its dynamics are governed by stability changes via Hopf bifurcation when $\theta = \theta_{\text{hopf}}$, without the possibility of multiple equilibria, a feature distinguishing this regime from the strong toxicity case discussed next.

\subsubsection{Strong toxicity regime}
\label{sec:LSA_strong}

When $\mu \in (1/2,1]$, plant toxins exert a dominant inhibitory effect, and the functional response $C(B)$ becomes unimodal. Herbivore intake initially increases with vegetation density, reaches a maximum at $B_M=1/(2\mu-1)$ with $C(B_M)=1/(4\mu)$, and then declines as toxicity accumulates (see Fig.~\ref{fig:C_behavior}(b)). Ecologically, this regime reflects environments where chemical defenses strongly limit consumption at high biomass, creating a trade-off between resource abundance and physiological stress. 

Results ensuring the existence and stability of coexistence equilibria are recollected in the following Theorem.

%Under strong toxicity, the system can exhibit two distinct coexistence equilibria in addition to the trivial state $\mathbf{U}_1=(0,0)$ and the vegetation-only state $\mathbf{U}_2=(\alpha,0)$. These equilibria correspond to different ecological scenarios and arise from the unimodal shape of the functional response, which creates separate parameter regions for feasibility. The conditions ensuring their existence are summarized in the next Theorem.

\begin{theorem}[Existence and stability of the coexistence equilibria under strong toxicity]
\label{th:strong}
Let us consider system \eqref{eq:adimmod} with $\mu\in(1/2,1]$. Then, the positive coexistence equilibria $\mathbf{U}_3$ exist for $\mu<\mu^\ast$, whereas the steady state $\mathbf{U}_4$ is admitted for $\max\{\hat{\mu},\mu^\ast\}<\mu<\mu_{ex}$, being $\hat{\mu}=\frac{\alpha+1}{2\alpha}$. Moreover, $\mathbf{U}_3$ is locally asymptotically stable with respect to homogeneous perturbations iff $\theta>\widetilde{\theta}$, whereas the equilibrium $\mathbf{U}_4$ is always unstable.
\end{theorem}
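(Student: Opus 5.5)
The proof follows the same route as Theorem~\ref{th:weak}. The key object is the nullcline relation
\[
H(B)=\frac{B(\alpha-B)}{\alpha m C(B)}.
\]
The coexistence equilibria are the roots of $mC(B)=\theta$, i.e.\ the points where the unimodal curve $C(B)$ (maximum at $B_M=1/(2\mu-1)$ with value $1/(4\mu)$) meets the horizontal level $\theta/m$. I will work in three steps.

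\textbf{Step 1: count the roots.} Two roots $B_3<B_M<B_4$ exist iff $\theta/m<1/(4\mu)$, i.e.\ $\mu<\mu_{ex}$. For $\mu\in(1/2,1]$ the denominator $2(\theta-m+\mu m)$ in \eqref{eq:B34} is positive once $\theta/m>1-\mu$, and both roots are then positive. This condition holds because $C(B)\to 1-\mu$ from above as $B\to\infty$. If $\theta/m\le 1-\mu$, then $B_4$ escapes to infinity or becomes negative, and only $B_3$ survives.

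\textbf{Step 2: feasibility of $\mathbf{U}_3$ and $\mathbf{U}_4$.} I will impose $B_i<\alpha$ by comparing $\theta/m$ with $C(\alpha)$, exactly as in Theorem~\ref{th:weak}.
\begin{itemize}
\item For $\mathbf{U}_3$, on the increasing branch: $B_3<\alpha$ iff either $\alpha\ge B_M$, or $C(\alpha)>\theta/m$. Combining the cases, the condition becomes $\mu<\mu^\ast$.
\item For $\mathbf{U}_4$, on the decreasing branch: $B_4<\alpha$ requires $\alpha>B_M$ and $C(\alpha)<\theta/m$. Now $\alpha>B_M=1/(2\mu-1)$ is exactly $\mu>(\alpha+1)/(2\alpha)=\hat\mu$, and $C(\alpha)<\theta/m$ is $\mu>\mu^\ast$. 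Together with $\mu<\mu_{ex}$ from Step~1, this gives the stated window $\max\{\hat\mu,\mu^\ast\}<\mu<\mu_{ex}$.
\end{itemize}
I expect this bookkeeping to be the main obstacle. One must check that on the increasing branch the single inequality $\mu<\mu^\ast$ really captures $B_3<\alpha$ in both cases $\alpha\lessgtr B_M$. My first attempt will use monotonicity of $C(\alpha)$ in $\mu$ and the identity $C(\alpha)=\theta/m\Leftrightarrow\mu=\mu^\ast$. I will treat the degenerate subcase $\theta/m\le 1-\mu$ separately.

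\textbf{Step 3: stability.} As in \eqref{eq:th2_cases}, at any coexistence equilibrium $g_H^\ast=0$, so the Jacobian conditions reduce to
\[
\mathrm{tr}=\theta H'(B_i)<0,\qquad \det=m\theta C'(B_i)H(B_i)>0.
\]
\begin{itemize}
\item At $B_4>B_M$ we have $C'(B_4)<0$ and $H_4>0$, so $\det<0$. Hence $\mathbf{U}_4$ is a saddle and always unstable.
\item At $B_3<B_M$ we have $C'(B_3)>0$, so $\det>0$, and stability reduces to $H'(B_3)<0$. This is the same quadratic inequality in $B_3$ as in Theorem~\ref{th:weak}. Substituting the explicit $B_3$ from \eqref{eq:B34} and solving for $\theta$ returns the same threshold $\widetilde\theta$ of \eqref{eq:th2_stab}, since that computation never used $\mu\le1/2$ beyond the sign of $C'$.
\end{itemize}
I will verify this last point, namely that the algebra producing $\widetilde\theta$ carries over verbatim to $\mu\in(1/2,1]$.
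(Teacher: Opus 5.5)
Your route is the paper's own: unimodality of $C$, comparing $\theta/m$ with $C(\alpha)$ and $C(B_M)=1/(4\mu)$, and the reductions $\mathrm{tr}=\theta H'(B_i)$, $\det=m\theta C'(B_i)H(B_i)$. The $\mathbf{U}_4$ part is fine; positivity of $B_4$ in the window follows because $C>1-\mu$ on $[B_M,\infty)$, so $C(\alpha)<\theta/m$ already forces $\theta/m>1-\mu$.

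\textbf{Step 2 for $\mathbf{U}_3$ cannot work as planned.} Your case split is correct: $B_3<\alpha$ iff $\alpha\ge B_M$ or $C(\alpha)>\theta/m$, i.e.\ iff $\mu\ge\hat\mu$ or $\mu<\mu^\ast$. This union does not collapse to $\mu<\mu^\ast$. For $\mu>\hat\mu$ one has $B_3<B_M<\alpha$ for every $\mu<\mu_{ex}$. Hence on the whole window $\max\{\hat\mu,\mu^\ast\}<\mu<\mu_{ex}$ both equilibria are feasible, with $0<B_3<B_M<B_4<\alpha$, as your own $\mathbf{U}_4$ analysis already shows. The paper's proof asserts the same ``iff $\mu<\mu^\ast$ whatever the location of $B_M$'', and it is equally false there. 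Prove only what the statement literally says. If $\mu<\mu^\ast$, then $\theta/m<C(\alpha)\le 1/(4\mu)$, so $\mu<\mu_{ex}$. The intermediate value theorem on $(0,\min\{\alpha,B_M\})$, where $C$ is increasing, then gives $B_3<\alpha$.

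\textbf{Step 3 does not carry over verbatim.} Set $q(B)=2(1-\mu)B^2+[3-\alpha(1-\mu)]B+1-\alpha(1+\mu)$ and let $B_H$ be its largest zero. Then $B_H<\alpha$, since $q(\alpha)=(1-\mu)\alpha^2+(2-\mu)\alpha+1>0$. The formula of Theorem~\ref{th:weak} is exactly $\widetilde\theta=mC(B_H)$. The chain $q(B_3)>0\iff B_3>B_H\iff\theta>mC(B_H)$ needs $C$ increasing on an interval containing $B_3$ and $B_H$, i.e.\ $B_H<B_M$. That is automatic for $\mu\le 1/2$, but not here.

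One computes $(2\mu-1)^2q(B_M)=2\mu^2[2-\alpha(2\mu-1)]$. So for $\alpha(2\mu-1)>2$ you get $B_3<B_M<B_H$ and $\mathbf{U}_3$ is never stable, even though $\widetilde\theta<m/(4\mu)$.
\begin{itemize}
\item Under $\mu<\mu^\ast$ this is harmless: $B_M<B_H<\alpha$ gives $\theta<mC(\alpha)<mC(B_H)=\widetilde\theta$.
\item On the extra window above, the claimed equivalence is false. Take $\mu=1$, $\alpha=5$, $m=1$, $\theta=0.2$. Then $B_3=(3-\sqrt5)/2$ and $B_4=(3+\sqrt5)/2<\alpha$, with $\widetilde\theta=3/16<\theta$. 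Yet $q(B_3)=3B_3-9<0$, so the trace is positive.
\end{itemize}
You must therefore state the stability claim only for $\mu<\mu^\ast$ and handle the case $B_H\ge B_M$ there explicitly. The paper's one-line reduction to Theorem~\ref{th:weak} skips this as well.
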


\begin{proof}
In order to look for the admissibility of the coexistence equilibria $\mathbf{U}_{3,4}$, let us investigate the location of $B_M$ with respect to $\alpha$. In particular, for $B_M > \alpha$ $\left(\mu < \hat{\mu} = \frac{\alpha+1}{2 \alpha}\right)$, it is easy to ascertain that $B_4 \not \in ]0, \alpha[$ whereas, once $B_M < \alpha$ ($\mu > \hat{\mu}$), the steady state $\mathbf{U}_4$ is biologically admissible for $\mu^\ast<\mu<\mu_{ex}$. On the contrary, whatever the location of $B_M$, the coexistence equilibrium $\mathbf{U}_3$ is biologically admissible iff $\mu < \mu^\ast$.  

The stability conditions \eqref{eq:cond_stab} reduce to

\begin{equation}
\begin{cases}
\theta H'(B_{3,4}) < 0, \medskip \\
m \theta C'(B_{3,4}) H(B_{3,4}) > 0.
\end{cases}
\label{eq:th4_cases}
\end{equation}
For what concern the steady state $\mathbf{U}_4$, $B_4$ lies beyond $B_M$ so that $C'(B_4)<0$, i.e. the second condition \eqref{eq:th4_cases}$_2$ is violated, and $\mathbf{U}_4$ is always unstable. As far as $\mathbf{U}_3$ is concerned, $C'(B_3)>0$ holds, so leaving the stability to $H'(B_3)<0$ as in Theorem~\ref{th:weak}. Therefore, the stability of $\mathbf{U}_3$ is guaranteed for $\theta>\widetilde{\theta}$.
\end{proof}

\noindent When the herbivore mortality rate $\theta$ approaches a critical threshold, the coexistence equilibrium $\mathbf{U}_3$ may undergo a qualitative change in stability. This transition indicates that the system could depart from its previous behavior and the nature of this change depends on the underlying bifurcation mechanism, as investigated in the following Corollary.

\begin{corollary}[Hopf bifurcation under strong toxicity]
\label{cr:hopf_strong}
Under the assumptions of Theorem~\ref{th:strong}, there exists $\theta=\theta_{\text{hopf}}$ such that $\mathbf{U}_3$ undergoes a Hopf bifurcation. For $\theta$ near $\theta_{\text{hopf}}$, the system admits small-amplitude periodic solutions.
\end{corollary}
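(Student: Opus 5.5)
The argument will repeat the one used for Corollary~\ref{cr:hopf_weak}, now based on Theorem~\ref{th:strong}. The only new ingredient is the sign structure of the unimodal response $C(B)$. At $\mathbf{U}_3$ we have $g_H^\ast=0$, because $mC(B_3)=\theta$. The Jacobian therefore reduces to
\begin{equation*}
\mathcal{L}^\ast=\begin{bmatrix} f_B^\ast & -\theta \\ mC'(B_3)H_3 & 0\end{bmatrix},
\end{equation*}
with $\text{tr}(\mathcal{L}^\ast)=f_B^\ast=\theta H'(B_3)$ and $\text{det}(\mathcal{L}^\ast)=m\theta C'(B_3)H_3$. These are exactly the conditions \eqref{eq:th4_cases}.

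The first step is to check that the determinant stays strictly positive along the whole admissible branch $\mu<\mu^\ast$. As stated in Theorem~\ref{th:strong}, $B_3$ is the smaller root of $mC(B)=\theta$, so $B_3<B_M=1/(2\mu-1)$ and hence $C'(B_3)>0$. Also $H_3>0$ because $0<B_3<\alpha$. Any loss of stability of $\mathbf{U}_3$ must therefore come from the trace alone, and no real eigenvalue can cross zero.

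The second step locates $\theta_{\text{hopf}}$ as the value at which the trace vanishes, that is $H'(B_3(\theta))=0$. Because $C$ is unchanged on $[0,B_M]$, the computation from Theorem~\ref{th:weak} carries over and gives $\theta_{\text{hopf}}=\widetilde{\theta}$ in \eqref{eq:th2_stab}. I would verify this by substituting $B_3$ into the quadratic condition $2(1-\mu)B_3^2+[3-\alpha(1-\mu)]B_3+1-\alpha(1+\mu)=0$. One must also make sure that the corresponding $B_3$ lies in $]0,\min\{\alpha,B_M\}[$, so that the point is admissible and $C'>0$ still holds there. If needed, I would impose this as a parameter restriction, namely that the hump of $H(B)=B(\alpha-B)/(\alpha mC(B))$ occurs before $\min\{\alpha,B_M\}$. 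At this value the eigenvalues are $\pm i\omega_H$ with $\omega_H=\sqrt{m\theta_{\text{hopf}}C'(B_3)H_3}>0$.

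The third step is transversality. Differentiating $\text{Re}\,\lambda=\tfrac12\theta H'(B_3(\theta))$ at $\widetilde{\theta}$, and using $H'(B_3)=0$ there, gives $\tfrac12\theta H''(B_3)\,\partial B_3/\partial\theta$. Implicit differentiation of $mC(B_3)=\theta$ gives $\partial B_3/\partial\theta=1/(mC'(B_3))>0$. The critical point is a maximum of $H$, so $H''<0$, and the derivative is therefore nonzero. The Hopf bifurcation theorem then yields a family of small-amplitude periodic orbits near $\theta_{\text{hopf}}$.

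I expect the main obstacle to be the nondegeneracy of the maximum, $H''(B_3)<0$, together with showing that the zero of $H'$ falls on the increasing side $B<B_M$ of $C$ rather than beyond it. To handle this, I would first try to show that $H(B)$ is unimodal on $]0,\alpha[$ by analysing the sign of the numerator of $H'$, which is the quadratic above with a strictly positive leading coefficient.
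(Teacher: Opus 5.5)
Your proposal is correct and follows the paper's route. The paper simply defers to the proof of Corollary~\ref{cr:hopf_weak}: the trace $\theta H'(B_3)$ vanishes at $\widetilde{\theta}$, the determinant $m\theta C'(B_3)H_3$ stays positive since $B_3<B_M$, and transversality comes from $H''(B_3)<0$ together with $\partial B_3/\partial\theta>0$.

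The parameter restriction you flag is genuinely needed, and the paper does not address it. Write $q(B)=2(1-\mu)B^2+[3-\alpha(1-\mu)]B+1-\alpha(1+\mu)$ and $s=2\mu-1$. Then $q(0)=1-\alpha(1+\mu)$, $q(\alpha)>0$ and $s^2q(B_M)=(1+s)^2(1-\alpha s/2)$. Hence the maximum of $H$ lies in $]0,\min\{\alpha,B_M\}[$ exactly when $\alpha(1+\mu)>1$ and $\alpha(2\mu-1)<2$. One small slip: at $\mu=1$ the leading coefficient of $q$ vanishes and $q$ is linear, which does not affect the argument.
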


\begin{proof}
The proof follows the same procedure as in Corollary~\ref{cr:hopf_weak}, verifying that the trace \eqref{eq:cond_stab}$_1$ vanishes while the determinant \eqref{eq:cond_stab}$_2$ remains positive and the transversality condition holds for $\theta = \theta_{hopf}$.
\end{proof}

\medskip
\noindent In summary, strong toxicity creates two disjoint parameter regions where coexistence is feasible, but only one interior equilibrium $\mathbf{U}_3$ can be stable. Indeed, the occurrence of $\mathbf{U}_4$ in the descending branch of the functional response, where herbivore intake is severely inhibited, leads to an accumulation of toxin that prevents herbivore persistence despite abundant vegetation. Consequently, bistability between two coexistence states does not occur; instead, abrupt transitions arise between a stable coexistence equilibrium and the vegetation-only state as parameters cross critical thresholds. Although Hopf bifurcation remains possible, the presence of multiple equilibria profoundly alters the phase space, highlighting contrasting ecological scenarios between weak and strong toxicity regimes.

\subsection{Turing bifurcation and spatial heterogeneity}

The previous stability analysis identified conditions under which homogeneous equilibria remain stable. However, the formation of spatial patterns requires a different mechanism: a state that is stable to uniform perturbations but loses stability when heterogeneity is introduced. This scenario corresponds to a Turing bifurcation, where diffusion and cross-diffusion interact with local dynamics to generate stationary spatial structures \cite{ALI2026, YANG2026}.

From the results obtained in Section~\ref{subsec:temporal}, it emerges that neither the trivial equilibrium $\mathbf{U}_1$ nor the vegetation-only state $\mathbf{U}_2$ can act as precursors for pattern formation, as the former is always unstable and the latter, although potentially stable, lacks herbivores and is therefore biologically irrelevant. Moreover, according to Theorem \ref{th:strong}, $\mathbf{U}_4$ remains unstable throughout its existence domain. Consequently, under both toxicity regimes, the only biologically meaningful candidate for Turing instability is $\mathbf{U}_3$, which represents a coexistence state for plants and herbivores. This equilibrium can be stable under homogeneous perturbations yet destabilized by spatially heterogeneous ones when diffusion and cross-diffusion effects are considered.

\begin{theorem}[Stability under non-homogeneous perturbations]
\label{th:stab_nonhom}
Let us consider system \eqref{eq:adimmod} with $1/4<\mu<\min\{1,\mu^\ast\}$. Then, the coexistence equilibrium $\mathbf{U}_3$ is locally stable with respect to spatially heterogeneous perturbations iff, for every wavenumber $k>0$, the following conditions hold

\begin{equation}
\begin{cases}
\text{tr}(\mathcal{L}^\ast-\text{M}k^2) = \theta H' (B_3) - (d+1)k^2<0,\\[6pt]
\text{det}(\mathcal{L}^\ast-\text{M}k^2) = m \theta C'(B_3) H(B_3) - \theta (d H' (B_3) + d_{HB} )k^2+dk^4>0.
\end{cases}
\label{eq:stab_nonhom}
\end{equation}
\end{theorem}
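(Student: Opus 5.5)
The approach is to linearize \eqref{eq:modmat} about $\mathbf{U}_3$ and study each Fourier mode separately. Substituting the perturbation \eqref{eq:fourier} gives the eigenvalue problem \eqref{eq:linprob}. For every fixed wavenumber $k$, the growth rates $\omega$ are therefore the eigenvalues of the $2\times 2$ matrix $\mathcal{L}^\ast-\text{M}k^2$, i.e. the roots of the dispersion relation \eqref{eq:disprel}. Local stability with respect to heterogeneous perturbations means $\text{Re}\,\omega(k)<0$ for every admissible $k$. By the Routh--Hurwitz criterion for a real quadratic $\omega^2-T\omega+\Delta=0$, this holds iff $T<0$ and $\Delta>0$. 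The equivalence in the statement then follows once $T=\text{tr}(\mathcal{L}^\ast-\text{M}k^2)$ and $\Delta=\det(\mathcal{L}^\ast-\text{M}k^2)$ are written explicitly at $\mathbf{U}_3$.

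The main computational step is to express the Jacobian entries at $\mathbf{U}_3$ in terms of $C$ and the prey isocline $H(B)=B(\alpha-B)/(\alpha m C(B))$. On the herbivore nullcline one has $mC(B_3)=\theta$, which gives $g_H^\ast=0$ and $g_B^\ast=mC'(B_3)H_3$. One also has $f_H^\ast=-mC(B_3)=-\theta$. For $f_B^\ast$, differentiate the identity $f(B,H(B))\equiv 0$ to get $f_B^\ast+f_H^\ast H'(B_3)=0$, hence $f_B^\ast=\theta H'(B_3)$. This is consistent with \eqref{eq:th2_cases} and \eqref{eq:th4_cases} used in Theorems~\ref{th:weak} and~\ref{th:strong}.

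Substituting these entries with $\text{M}$ from \eqref{eq:modmat_specifico}, the trace is $f_B^\ast+g_H^\ast-(1+d)k^2=\theta H'(B_3)-(d+1)k^2$. The determinant is $(f_B^\ast-k^2)(g_H^\ast-dk^2)-f_H^\ast(g_B^\ast-d_{HB}k^2)$. Expanding and using the entries above gives
\[
m\theta C'(B_3)H_3-\big(d\,\theta H'(B_3)+\theta d_{HB}\big)k^2+dk^4,
\]
which matches \eqref{eq:disprel} and \eqref{eq:stab_nonhom}$_2$. The hypothesis $1/4<\mu<\min\{1,\mu^\ast\}$ is used only to guarantee, via Theorems~\ref{th:weak}--\ref{th:strong}, that $\mathbf{U}_3$ exists with $B_3\in\,]0,\alpha[$ and $C'(B_3)>0$, so that the $k^0$ term is well defined and positive.

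The step I expect to need the most care is the \emph{iff}. The sufficiency direction is immediate from Routh--Hurwitz. For necessity, suppose \eqref{eq:stab_nonhom} fails for some $k_0>0$. If $\Delta(k_0)<0$, there is a real positive root. If $\Delta(k_0)\ge 0$ and $T(k_0)\ge 0$, there is a root with $\text{Re}\,\omega\ge 0$, so asymptotic stability is lost. On a bounded domain $k$ ranges over a discrete set, and there the statement is read as quantifying over admissible modes. I would close with a brief observation: since $H'(B_3)<0$ in the region $\theta>\widetilde\theta$, the trace condition holds automatically, and any loss of stability must come from the determinant.
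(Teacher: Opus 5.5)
Your proposal is correct and follows the same route as the paper: linearize about $\mathbf{U}_3$, reduce mode by mode to the dispersion relation \eqref{eq:disprel}, and apply Routh--Hurwitz. It also supplies what the paper leaves implicit, namely the entries $g_H^\ast=0$, $f_H^\ast=-\theta$, $g_B^\ast=mC'(B_3)H_3$, $f_B^\ast=\theta H'(B_3)$ and an explicit necessity argument. The only slip is that your necessity case split omits $\Delta(k_0)=0$ with $T(k_0)<0$; there the roots are $0$ and $T(k_0)$, so the zero growth rate again rules out (linear) asymptotic stability.
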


\begin{proof}
Linearizing system \eqref{eq:adimmod} around $\mathbf{U}_3$ and considering perturbations of the form \eqref{eq:fourier} yield the characteristic equation \eqref{eq:disprel}. Conditions \eqref{eq:stab_nonhom} ensure that, for all admissible $k$, both eigenvalues have negative real parts, guaranteeing local asymptotic stability under spatially heterogeneous disturbances.
\end{proof}

\noindent When these conditions fail for some non-zero wavenumber $k$, the equilibrium loses stability through a diffusion-driven mechanism, as illustrated in the following corollary.

\begin{corollary}[Turing instability threshold]
\label{cr:turing}
Let us consider system \eqref{eq:adimmod} with $1/4<\mu<\min\{1,\mu^\ast\}$ and $\theta>\widetilde{\theta}$. Then, the equilibrium $\mathbf{U}_3$ undergoes a Turing bifurcation when the cross-diffusion coefficient $d_{HB}$ exceeds

\begin{equation}
d_{HB}^T=\frac{2\sqrt{d m\theta C'(B_3)H(B_3)}-d\theta H'(B_3)}{\theta}
\label{eq:turing_thr}
\end{equation}
with the critical wavenumber

\begin{equation}
k_T=\sqrt[4]{\frac{\theta m C'(B_3)H(B_3)}{d}}.
\label{eq:turing_wave}
\end{equation}
\end{corollary}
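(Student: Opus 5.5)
The plan is to work directly with the conditions of Theorem~\ref{th:stab_nonhom}. Write $a:=\theta H'(B_3)$ for the trace of $\mathcal{L}^\ast$ and $\delta:=m\theta C'(B_3)H(B_3)$ for its determinant. The hypotheses $\theta>\widetilde{\theta}$ and $1/4<\mu<\min\{1,\mu^\ast\}$ give, by Theorems~\ref{th:weak} and \ref{th:strong}, that $a<0$ and $\delta>0$.

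The first step is to observe that the trace condition \eqref{eq:stab_nonhom}$_1$ cannot fail. For every $k$ we have $a-(d+1)k^2<a<0$. Consequently, loss of stability at a nonzero wavenumber can only come from the determinant becoming nonpositive. It cannot come from a Hopf-type crossing. Any such instability is therefore through a real eigenvalue crossing zero, which is a stationary (Turing) mode.

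Next we study the determinant as a quadratic in $z=k^2\ge 0$:
\[
P(z)=dz^2-\theta\big(dH'(B_3)+d_{HB}\big)z+\delta .
\]
Since $P(0)=\delta>0$ and $d>0$, $P$ takes a nonpositive value on $z>0$ if and only if its vertex lies at positive $z$ and the minimum value there is $\le 0$. The vertex is at
\[
z_{\min}=\frac{\theta(dH'(B_3)+d_{HB})}{2d},
\]
and the minimum value is
\[
P(z_{\min})=\delta-\frac{\theta^2\big(dH'(B_3)+d_{HB}\big)^2}{4d}.
\]
The threshold is the marginal case $P(z_{\min})=0$ with $z_{\min}>0$. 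This requires $\theta(dH'(B_3)+d_{HB})=2\sqrt{d\delta}$, where the positive root is taken so that $z_{\min}>0$. Solving for $d_{HB}$ gives exactly \eqref{eq:turing_thr}:
\[
d_{HB}^T=\frac{2\sqrt{d\delta}-d\theta H'(B_3)}{\theta}.
\]
Substituting back gives $z_{\min}=\sqrt{\delta/d}$, and hence $k_T=(\delta/d)^{1/4}$, which is \eqref{eq:turing_wave}.

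Finally, we verify the direction of the inequality. $P(z_{\min})$ is strictly decreasing in $d_{HB}$ on the range where $dH'(B_3)+d_{HB}>0$. For $d_{HB}<d_{HB}^T$ either $z_{\min}\le 0$ or $P(z_{\min})>0$, so $P>0$ for all $k>0$ and $\mathbf{U}_3$ stays stable by Theorem~\ref{th:stab_nonhom}. For $d_{HB}>d_{HB}^T$ there is a band of wavenumbers around $k_T$ on which $\det<0$, so a positive real eigenvalue appears. At $d_{HB}=d_{HB}^T$ the mode $k_T$ has a simple zero eigenvalue and the other eigenvalue is negative.

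The main obstacle I expect is not algebraic. It is justifying that this marginal crossing is a genuine bifurcation, i.e.\ that the zero eigenvalue crosses transversally. To handle it, I would differentiate the eigenvalue $\omega(k_T;d_{HB})$ with respect to $d_{HB}$. Since $\omega\cdot\mathrm{tr}\approx-\det$ near zero, we get
\[
\frac{\partial\omega}{\partial d_{HB}}=\frac{\partial_{d_{HB}}\det}{-\mathrm{tr}}=\frac{-\theta k_T^2}{\mathrm{tr}}>0,
\]
because the trace is negative. This confirms a nonzero crossing speed.
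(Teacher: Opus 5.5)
Your proposal is correct and follows the same route as the paper. The paper's proof imposes that the determinant in \eqref{eq:stab_nonhom}$_2$ vanishes together with $\partial_k\det=0$, which is exactly your vertex computation for the quadratic in $z=k^2$; you additionally supply details the paper leaves implicit, namely that the trace can never fail, the direction of the inequality in $d_{HB}$, and transversality. There is one small slip in the transversality step: $\omega^2-\mathrm{tr}\,\omega+\det=0$ gives $\omega\,\mathrm{tr}\approx\det$ (not $-\det$), so $\partial\omega/\partial d_{HB}=\partial_{d_{HB}}\det/\mathrm{tr}$, and your final value $-\theta k_T^2/\mathrm{tr}>0$ is correct only because two sign errors cancel.
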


\begin{proof}
The assumptions of the Theorems \ref{th:weak} and \ref{th:strong} guarantee that the coexistence state $\mathbf{U}_3$ exists and is stable with respect homogenous perturbation iff $\mu < \mu^\ast$ and $\theta > \widetilde{\theta}$. Therefore, the emergence of coherent structures is obtained once the condition \eqref{eq:stab_nonhom}$_2$ is violated together with \mbox{$\partial \text{det}(\mathcal{L}^\ast-\text{M}k^2) / \partial k = 0$}, namely a root of the characteristic polynomial crosses the null-axis for a non null wavenumber via a maximum. It is easy to ascertain that the above conditions lead to the critical value for the bifurcation parameter $d_{HB}^T$ \eqref{eq:turing_thr} as well as for the critical wavenumber $k_T$ \eqref{eq:turing_wave}.
\end{proof}

\begin{remark}
Biologically, Turing instability corresponds to the emergence of vegetation patches and herbivore aggregations. Cross-diffusion plays a decisive role: when herbivores avoid dense vegetation ($d_{HB}>0$), they redistribute toward sparse areas, reducing local grazing pressure and creating spatial contrasts that destabilize uniform states. Conversely, attraction to vegetation ($d_{HB} \leq 0$) suppresses pattern formation. These dynamics mirror real ecosystems where movement strategies and resource distribution interact to shape landscape heterogeneity.
\end{remark}

From an ecological perspective, $d_{HB}>0$ reflects environments characterized by resource abundance and absence of external stressors. In such scenarios, vegetation is sufficiently dense that herbivores do not need to actively search for food; however, high plant biomass imposes mobility constraints, prompting herbivores to aggregate in areas of lower vegetation.
On the contrary, in resource-limited environments, herbivores are attracted to the few vegetated patches available, corresponding to $d_{HB}<0$. Under this assumption, the present model does not predict any onset of patterned structures, likely due to its simplified representation of ecosystem dynamics. Indeed, resource scarcity is often driven by external factors, such as water limitation, which are not accounted for in the current formulation. To capture these processes, the model should be extended by introducing additional variables describing resource availability and its dynamics. This would lead to a system of three or more coupled equations, enabling the description of more complex trophic interactions. 

%For instance, one could consider an extended network where herbivores interact with both vegetation and water, consuming plant biomass and using water for physiological needs. Vegetation, in turn, depends on water for growth and is grazed by herbivores, while water acts as a limiting resource exploited by both species. Such a formulation would capture bidirectional feedbacks among species and resources, providing a more realistic framework to investigate how resource limitation and multi-level interactions influence the onset and stability of spatial patterns in stressed ecosystems.

From a mathematical perspective, the inclusion of cross-diffusion introduces a nonlinear coupling between species movement and resource distribution, enhancing the potential for pattern formation via Turing-like instabilities. Unlike classical reaction-diffusion systems, where spatial heterogeneity typically requires a strong disparity in diffusion rates, the generalized framework considered here predicts that even moderate diffusion ratios can trigger instability when herbivore displacement away from dense vegetation is enough.

\begin{remark}[Mixed spatiotemporal structures]
The loci of Hopf and Turing bifurcations, identified in the previous analysis, may intersect within the parameter space, giving rise to codimension-two points where the system simultaneously loses stability with respect to both temporal and spatial modes. At these critical configurations, the emerging patterns are neither purely stationary nor purely oscillatory, but exhibit mixed spatiotemporal features: they are periodic in space and oscillatory in time with $\omega_H^2 = d ({k_T})^4$. Such scenarios are particularly relevant in ecological contexts, as they describe regimes where vegetation patches and herbivore aggregations not only organize spatially but also fluctuate periodically, reflecting complex feedbacks between resource distribution, chemical defenses, and movement strategies.
\end{remark}

\subsection{Numerical validation and bifurcation-driven pattern}

To validate the theoretical predictions on the existence and stability of equilibria and to examine the onset of spatio-temporal patterns near bifurcation thresholds, the numerical framework is now introduced. Simulations are performed in a one-dimensional setting under parameter regimes representative of weak and strong toxicity, thus highlighting the dynamics induced by these different ecological conditions. All the parameters considered in the following analysis are taken from previously published studies available in the literature \cite{Li2006, LIU2008, FENG2008, XIANG2021}. The analysis focuses on three aspects: (i) stability of homogeneous equilibria; (ii) oscillatory behaviour triggered by Hopf bifurcations; and (iii) diffusion-driven instabilities leading to spatial pattern formation.

\subsubsection{Weak toxicity regime}
\label{subsec:weak}

As shown in Section~\ref{subsec:temporal}, in the weak toxicity regime, the toxin-modified functional response is monotonic and the model \eqref{eq:adimmod} admits at most one biologically feasible coexistence equilibrium. To isolate local interactions between plant biomass and herbivore density, diffusive effects are initially neglected, emphasizing the temporal oscillatory character of the dynamics. The bifurcation structure and stability of equilibria~\eqref{eq:steadystates} are analyzed as the toxicity parameter $\mu \in [1/4,1/2]$. Figure~\ref{fig:weak_regime_nodiff} summarizes the main outcomes in the $(\theta,\mu)$-plane through six panels, where the first row (a)--(c) explores variations in the plant carrying capacity $\alpha$, and the second row (d)--(f) investigates the effect of conversion efficiency $m$. In each panel, the red solid curve denotes the existence locus of the coexistence equilibrium $\mathbf{U}_3=(B_3,H_3)$, while the green dash-dotted curve marks the Hopf threshold $\theta=\theta_{hopf}$ beyond which $\mathbf{U}_3$ loses stability.
From an ecological perspective, the region to the left of the red curve ensures coexistence of vegetation and herbivores, whereas instability of $\mathbf{U}_3$ to the left of the green curve leads to oscillatory dynamics driven by Hopf bifurcation. It is worth noting that the existence domain of $\mathbf{U}_3$ coincides with the instability of the vegetation-only equilibrium $\mathbf{U}_2=(\alpha,0)$, thereby excluding any bistability between these states.

The role of $\alpha$ in the above dynamics can be summarized as follows:
\begin{enumerate}[label=$\alpha_\arabic*)$]
\item \textbf{Low carrying capacity ($\alpha \leq 2/3$)}: the Hopf threshold lies outside the weak toxicity range, resulting in full stability (panel (a)). Limited resources reduce grazing pressure, favouring equilibrium persistence.
\item \textbf{Intermediate carrying capacity ($2/3<\alpha<4/5$)}: the Hopf threshold enters the weak toxicity regime, creating a zone of oscillatory dynamics (panel (b)). Increased resources amplify trophic interactions, making the system prone to cycles.
\item \textbf{High carrying capacity ($\alpha \geq 4/5$)}: the instability region expands significantly (panel (c)), promoting large-amplitude oscillations or even complex behaviour due to strong resource availability.
\end{enumerate}

\noindent Fixing the $\alpha_3$ scenario, the influence of $m$ is then assessed. Both the existence locus and Hopf threshold lie within the weak toxicity regime, partitioning the parameter space into stable and unstable regions. By inspecting the diagrams shown in Figure \ref{fig:weak_regime_nodiff}(d)--(f), it is evident that increasing $m$ enlarges the coexistence domain markedly, while the Hopf curve shifts moderately. Ecologically, higher conversion efficiency enhances herbivore persistence but intensifies grazing pressure, which combined with low mortality may destabilize the system. Conversely, sufficiently high mortality restores stability.

\begin{figure}[b!]
\centering
\includegraphics[width=1\textwidth]{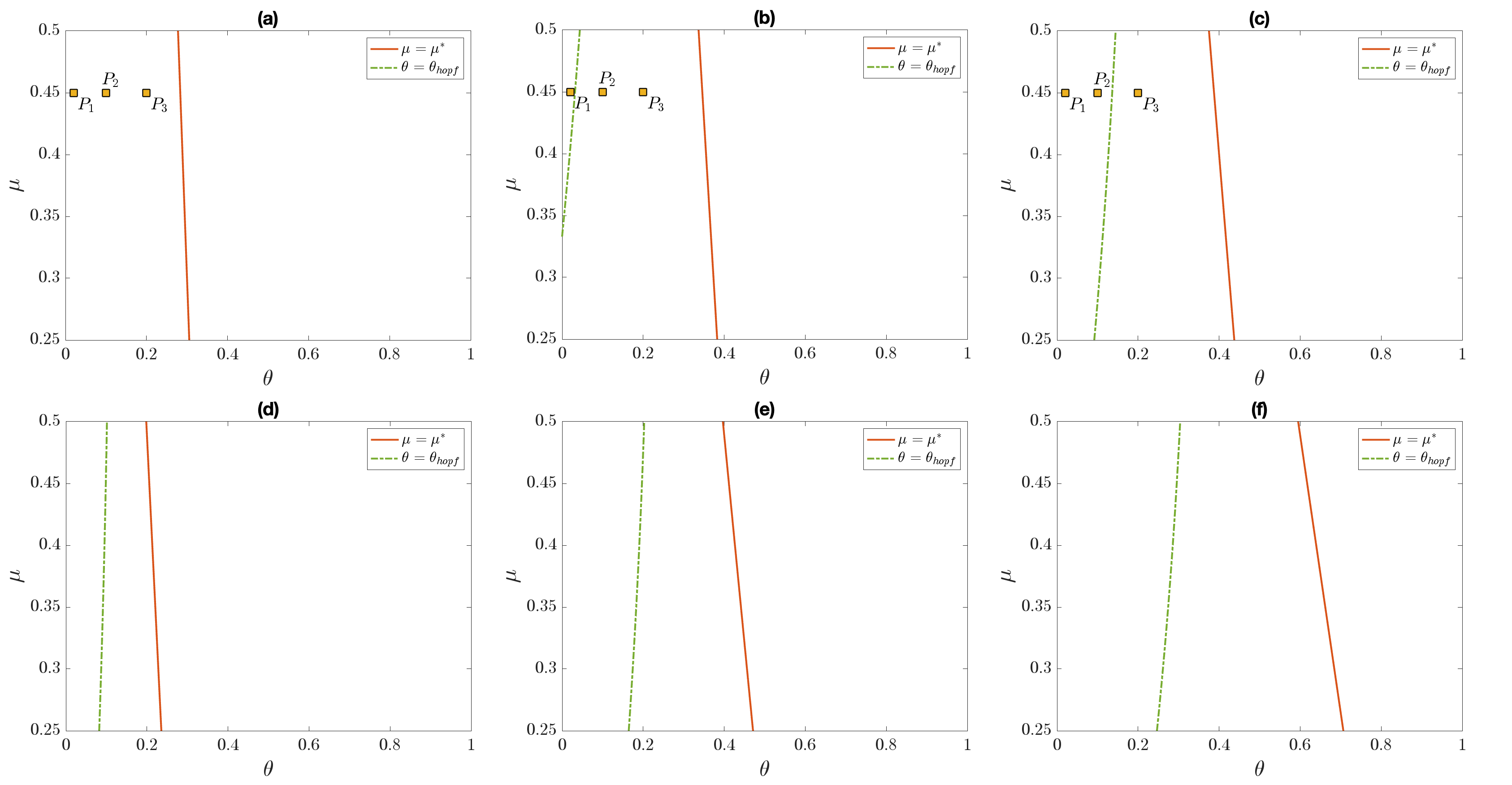}
\caption{Bifurcation diagrams under weak toxicity ($\mu \in [1/4,1/2]$) without diffusion in the $(\theta,\mu)$-plane. Panels (a)--(c): effect of carrying capacity $\alpha$; panels (d)--(f): effect of conversion efficiency $m$. Red solid line: existence locus of $\mathbf{U}_3$; green dash-dotted line: Hopf threshold. Parameter values: $\alpha=0.5$ (a), $\alpha=0.75$ (b), $\alpha=1$ (c), $\alpha=1.2$ (d--f); $m=0.5$ (d), $m=1$ (a--c,e), $m=1.5$ (f). Yellow squares indicate configurations used in subsequent simulations ($P_1$, $P_2$, $P_3$).}
\label{fig:weak_regime_nodiff}
\end{figure}

To validate these theoretical predictions, numerical simulations are performed under the parameter configurations highlighted in Figure \ref{fig:weak_regime_nodiff})(a)--(c) by yellow squares. Figure~\ref{fig:weak_regime_sim} reports the time series of plant biomass and herbivore density following small perturbations of the coexistence equilibrium $\mathbf{U}_3$. Each row corresponds to a configuration ($P_1$, $P_2$, $P_3$), while columns represent the scenarios associated with panels (a)--(c). The simulations reveal distinct behaviours consistent with the bifurcation structure. In some cases, trajectories converge monotonically to $\mathbf{U}_3$, indicating strong stability and rapid damping of perturbations (panels (a,d,e,g--i)). In others, sustained oscillations emerge, a clear signature of Hopf bifurcation, where feedback between vegetation recovery and herbivore growth drives cyclic fluctuations (panels (b,c,f)). Ecologically, these patterns confirm the theoretical mechanisms: low mortality combined with high conversion efficiency promotes rapid herbivore growth, intensifying grazing pressure and destabilizing vegetation, which triggers oscillatory dynamics. On the contrary, higher mortality mitigates this effect, allowing vegetation recovery and preventing large-amplitude cycles.

\begin{figure}[t!]
\centering
\includegraphics[width=1\textwidth]{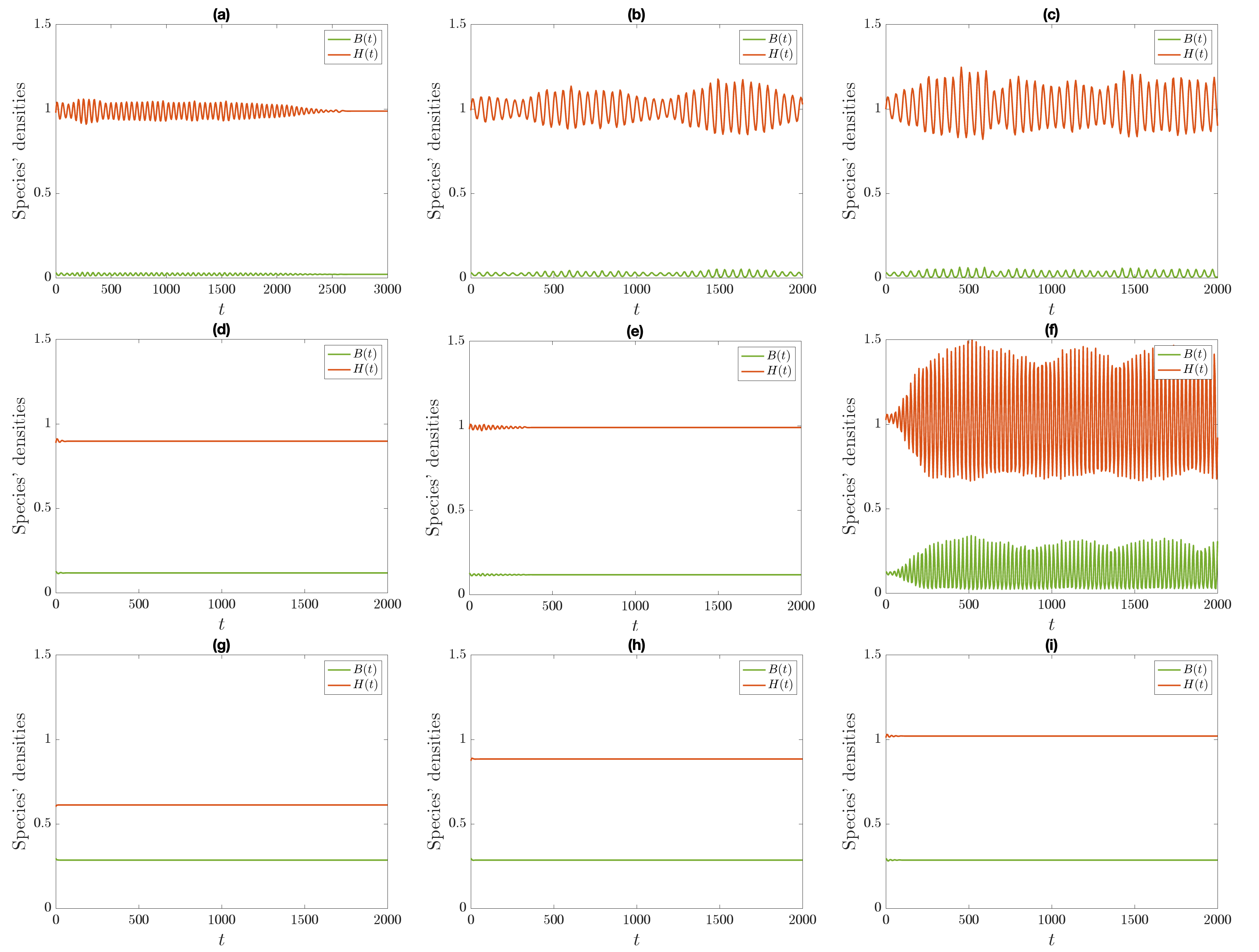}
\caption{Time series of plant biomass and herbivore density after perturbations of $\mathbf{U}_3$ for configurations $P_1$, $P_2$, and $P_3$ (rows) corresponding to panels (a)--(c) in Figure~\ref{fig:weak_regime_nodiff}. Columns show the different carrying capacity scenarios associated with panels (a), (b), and (c) of Figure \ref{fig:weak_regime_nodiff}.}
\label{fig:weak_regime_sim}
\end{figure}

Next, spatial diffusion is incorporated to investigate how temporal dynamics interact with pattern formation, focusing on the onset of Turing instabilities and mixed spatiotemporal behaviours. Figure~\ref{fig:weak_regime_diff} summarizes the bifurcation structure in the $(\theta,d_{HB})$-plane through nine panels, illustrating the influence of $\alpha$, $\mu$, and $m$. In each panel, the red solid curve denotes the existence locus of $\mathbf{U}_3$, the green dash-dotted line marks the Hopf threshold $\theta=\theta_{\text{hopf}}$, and the burgundy dashed line indicates the Turing instability. These curves split the parameter space into four regions with distinct dynamics: 
\begin{itemize}
\item \textbf{Stable uniform state}: right of the Hopf locus and below the Turing curve, where mortality and diffusion suppress oscillations and spatial segregation.
\item \textbf{Pure Turing instability}: right of the Hopf curve and above the Turing threshold, generating stationary vegetation patches and herbivore aggregations via cross-diffusion.
\item \textbf{Pure Hopf instability}: left of the Hopf locus and below the Turing curve, producing temporal oscillations without spatial modulation.
\item \textbf{Mixed-mode dynamics}: above the Turing threshold and left of the Hopf curve, where both instabilities interact, yielding spatio-temporal patterns periodic in space and oscillatory in time. The intersection of Hopf and Turing loci identifies the codimension-two point (Turing-Hopf bifurcation), near which small parameter changes trigger abrupt transitions among these regimes.
\end{itemize}

Systematic trends emerge across panels. In the first row (a)--(c), increasing the herbivore-to-plant diffusion ratio $d$ leaves the Hopf locus unchanged but shifts the Turing threshold upward, reducing pure Turing instability and delaying mixed-mode dynamics. Ecologically, stronger herbivore dispersal mitigates spatial segregation, favouring uniform or oscillatory states. In the second row (d)--(f), higher carrying capacity $\alpha$ enlarges the coexistence domain and shifts the Hopf curve toward higher mortality, broadening oscillatory regions, while the Turing threshold moves downward, slightly expanding spatial instability. In the third row (g)--(i), toxicity $\mu$ mainly reduces coexistence feasibility and enlarges the Hopf region, whereas the Turing locus remains nearly unaffected, indicating weak sensitivity of spatial patterns to chemical defenses. Finally, in the fourth row (l)--(n), increasing conversion efficiency $m$ markedly enlarges coexistence and shifts both Hopf and Turing thresholds, enhancing the potential for oscillatory and patterned dynamics. From an ecological perspective, high $\alpha$ and $m$ promote coexistence but increase the risk of Hopf-driven oscillations, whereas strong toxicity suppresses temporal cycles and favours pattern formation.

\begin{figure}[p]
\centering
\includegraphics[width=1\textwidth]{Figure5.eps}
\caption{Bifurcation diagrams under weak toxicity ($\mu \in [1/4,1/2]$) in the $(\theta,d_{HB})$-plane. Panels: (a)--(c) effect of diffusion ratio $d$; (d)--(f) carrying capacity $\alpha$; (g)--(i) toxicity $\mu$; (l)--(n) conversion efficiency $m$. Red solid line: existence locus; green dash-dotted: Hopf threshold; burgundy dashed: Turing threshold. Yellow squares indicate configurations ($P_4$--$P_7$) used in simulations. The parameter values are $d = 10$ in (a), $d = 100$ in (b, d-n), $d = 1000$ in (c), $\alpha = 0.5$ in (d), $\alpha = 0.75$ in (e), $\alpha = 1$ in (a-c, f, g-n), $\mu = 0.4$ in (g), $\mu = 0.45$ in (a-f, h, l-n), $\mu = 0.5$ in (l), $m = 0.5$ in (l), $m = 1$ in (m) and $m = 1.5$ in (a-i, n).}
\label{fig:weak_regime_diff}
\end{figure}

To corroborate the theoretical predictions at the onset of different dynamical regimes, numerical simulations of system~\eqref{eq:adimmod} are performed under parameter configurations selected from Figure~\ref{fig:weak_regime_diff}(c). Four representative points in the $(\theta,d_{HB})$-plane, denoted by P$_4$, P$_5$, P$_6$, and P$_7$, are considered, each belonging to a distinct region of the bifurcation diagram. Figure \ref{fig:weak_regime_diff_sim} illustrates the spatio-temporal evolution of plant biomass for these configurations. Panel (a) is associated with P$_4$, which lies in the stability region of the coexistence equilibrium $\mathbf{U}_3$ so that the system converges monotonically to a spatially uniform state, confirming that both temporal and spatial instabilities are suppressed.
Differently, the framework considered in panel (b), associated with P$_5$, belongs to the domain of pure Turing instability. The equilibrium $\mathbf{U}_3$ remains stable under homogeneous perturbations but loses stability when spatial heterogeneity is introduced, leading to stationary patterns characterized by vegetation patches and herbivore aggregations. Note that these structures arise from cross-diffusion mechanisms that drive herbivores away from dense vegetation, illustrating how movement strategies shape landscape heterogeneity. Then, the setting shown in panel (c), corresponding to P$_6$, falls within the mixed-mode region where Hopf and Turing instabilities interact. The resulting dynamics combine temporal oscillations with spatial periodicity, producing spatio-temporal patterns that alternate between vegetation recovery and herbivore clustering. Finally, panel (d) is associated with P$_7$, which lies in the region of pure Hopf instability. Here, $\mathbf{U}_3$ loses stability through a Hopf bifurcation, generating homogeneous oscillations in time without spatial modulation. From an ecological viewpoint, this regime corresponds to cyclic fluctuations in vegetation and herbivore densities driven by toxin-mediated feedback under low mortality, while diffusion remains insufficient to destabilize spatial uniformity.

\begin{figure}[b!]
\centering
\includegraphics[width=0.8\textwidth]{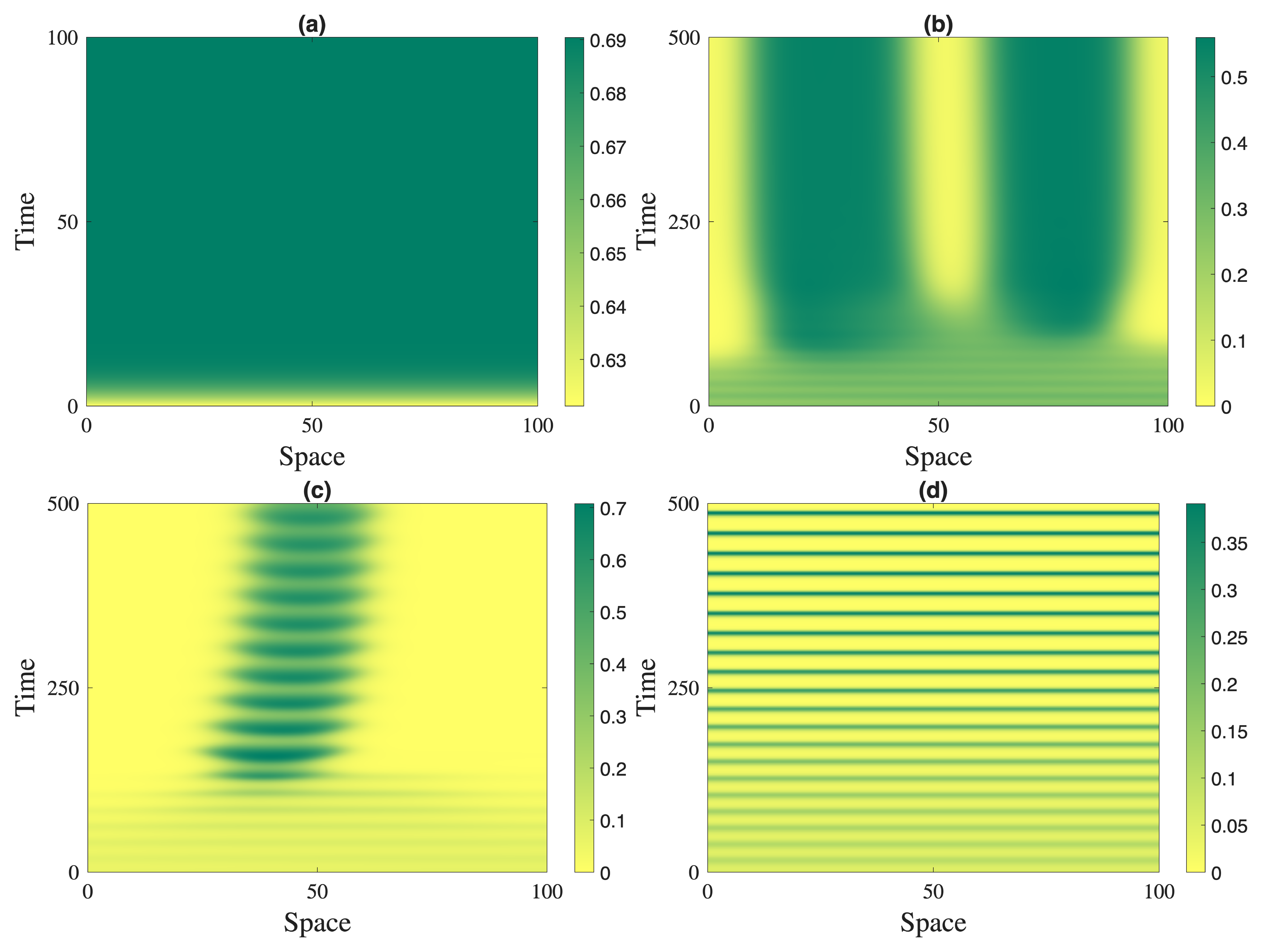}
\caption{Spatio-temporal evolution of plant biomass for configurations $P_4$--$P_7$ marked in Figure~\ref{fig:weak_regime_diff}(c). Panel (a) corresponds to configuration P$_4$, panel (b) to P$_5$, panel (c) to P$_6$, and panel (d) to P$_7$.}
\label{fig:weak_regime_diff_sim}
\end{figure}

Overall, these simulations corroborate the analytical framework, demonstrating how small parameter variations can trigger transitions between uniform states, stationary patterns, and oscillatory structures.

\subsubsection{Strong toxicity regime}

The system \eqref{eq:adimmod} is now examined under strong toxicity, $\mu \in (1/2,1]$, where the toxin-modified functional response becomes unimodal, introducing qualitative differences compared to the weak regime discussed in Section \ref{subsec:weak}. The presence of a maximum in the consumption rate may allow an additional equilibrium $\mathbf{U}_4$, which is always unstable.

Numerical simulations confirm these theoretical predictions and explore oscillatory behaviour through Hopf bifurcation. As in the previous regime, diffusive effects are initially neglected to focus on local interactions. Figure~\ref{fig:strong_regime_nodiff} summarizes the bifurcation structure in the $(\theta,\mu)$-plane for different ecological parameters, highlighting the existence locus of $\mathbf{U}_3$ (red curve) and the Hopf threshold $\theta=\theta_{\text{hopf}}$ (green curve). As in the weak toxicity case, stability of $\mathbf{U}_3$ is achieved only to the right of the Hopf curve, while lower mortality values trigger oscillations. The impact of $\alpha$ can be summarized as follows:
\begin{enumerate}[label=$\alpha_\arabic*$), start=4]
\item \textbf{Low carrying capacity ($\alpha \leq 1/2$)}: the Hopf curve lies outside the strong toxicity range, so $\mathbf{U}_3$ remains stable throughout its domain (panel (a)). Limited resources constrain herbivore growth and prevent cyclic behaviour.
\item \textbf{Intermediate carrying capacity ($1/2<\alpha<2/3$)}: the Hopf threshold enters the strong toxicity regime, creating a narrow instability zone (panel (b)). Increased resource availability strengthens trophic interactions, making oscillations possible under low mortality.
\item \textbf{High carrying capacity ($\alpha \geq 2/3$)}: the instability region expands significantly (panel (c)), favouring large-amplitude oscillations or even complex temporal patterns.
\end{enumerate}

Fixing $\alpha$ in the third scenario ($\alpha > 2/3$), panels (d)--(f) illustrate the influence of conversion efficiency $m$. Increasing $m$ shifts the existence locus markedly to the right, enlarging the coexistence domain, while the Hopf curve moves moderately, expanding the instability region. From the ecological viewpoint, higher $m$ promotes herbivore persistence under broader mortality and toxicity conditions but amplifies grazing pressure, which combined with low mortality destabilizes the system. On the contrary, sufficiently high mortality restores stability despite intensified resource exploitation.

\begin{figure}[t!]
\centering
\includegraphics[width=1\textwidth]{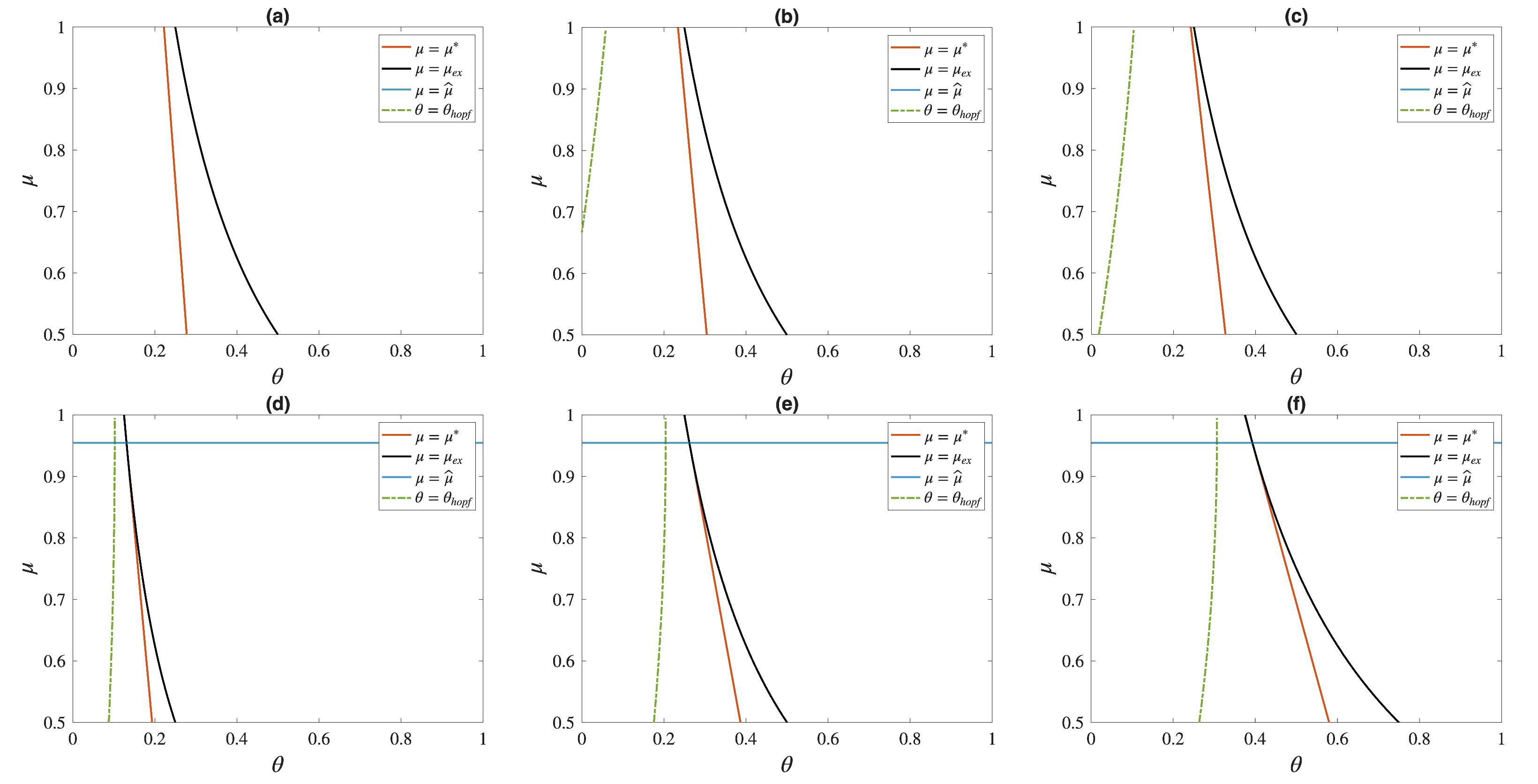}
\caption{Bifurcation diagrams under strong toxicity ($\mu \in (1/2,1]$) without diffusion in the $(\theta,\mu)$-plane. Panels (a)--(c): effect of carrying capacity $\alpha$; panels (d)--(f): effect of conversion efficiency $m$. Red solid line: existence locus of $\mathbf{U}_3$; green dash-dotted line: Hopf threshold. The parameter values are $\alpha = 0.5$ in (a), $\alpha = 0.6$ in (b), $\alpha = 0.7$ in (c), $\alpha = 1.1$ in (d-f), $m = 0.5$ in (d), $m = 1$ in (a-c, e), $m = 1.5$ in (f).}
\label{fig:strong_regime_nodiff}
\end{figure}

%To validate these theoretical predictions, numerical simulations are performed under the parameter configurations highlighted in Figure~\ref{fig:strong_regime_nodiff}(a)--(c). Figure~\ref{fig:strong_regime_sim} reports the time series of plant biomass and herbivore density following small perturbations of the coexistence equilibrium $\mathbf{U}_3$. Each row corresponds to a configuration ($Q_1$, $Q_2$, $Q_3$), while columns represent the scenarios associated with panels (a)--(c) of Figure \ref{fig:strong_regime_nodiff}.
%The simulations reveal distinct behaviours consistent with the bifurcation structure. In some cases, trajectories converge monotonically to $\mathbf{U}_3$, confirming local stability. In others, persistent oscillations emerge, indicating a Hopf bifurcation where feedback between vegetation recovery and herbivore growth drives cyclic fluctuations.

%\begin{figure}[p!]
%\centering
%\includegraphics[width=1\textwidth]{Figure8.eps}
%\caption{Time series of plant biomass and herbivore density after perturbations of $\mathbf{U}_3$ for configurations $Q_1$--$Q_3$ (rows) corresponding to panels (a)--(c) in Figure~\ref{fig:strong_regime_nodiff} (columns).}
%\label{fig:strong_regime_sim}
%\end{figure}

Finally, spatial diffusion is incorporated to investigate how temporal dynamics interact with pattern formation. The bifurcation diagrams in Figure~\ref{fig:strong_regime_diff} show that, despite the transition to strong toxicity, the qualitative influence of ecological parameters remains consistent with the weak regime. Higher carrying capacity $\alpha$ and conversion efficiency $m$ enlarge the coexistence domain and extend regions where Hopf and Turing instabilities occur, while random diffusion $d$ retains its stabilizing role against spatial heterogeneity. Variations in toxicity $\mu$ primarily affect coexistence feasibility and the Hopf threshold, leaving the Turing locus nearly unchanged.
Strong toxicity mainly reduces overall coexistence feasibility without altering the relative impact of other traits.

\begin{figure}[p]
\centering
\includegraphics[width=1\textwidth]{Figure9.eps}
\caption{Bifurcation diagrams under strong toxicity ($\mu \in (1/2,1]$) in the $(\theta,d_{HB})$-plane. Panels: (a)--(c) effect of diffusion ratio $d$; (d)--(f) carrying capacity $\alpha$; (g)--(i) toxicity $\mu$; (l)--(n) conversion efficiency $m$. Red solid line: existence locus; green dash-dotted: Hopf threshold; burgundy dashed: Turing threshold. Parameter values are the same as those depicted in Figure \ref{fig:weak_regime_diff}. Exception is made for toxicity patameter $\mu$, i.e. $\mu = 0.5$ in (g), $\mu = 0.7$ in (a-f, h, l-n) and $\mu = 0.9$ in (l).}
\label{fig:strong_regime_diff}
\end{figure}

\newpage

\section{Amplitude equations and pattern modulation near instability points}
\label{sec:close}

The existence of spatial patterns is analytically established through a linear stability analysis of the spatially homogeneous steady states, which identifies the conditions for diffusion-driven instabilities. However, this approach cannot fully describe the system near critical points. In particular, linear stability analysis does not predict how the amplitude of emerging patterns evolves beyond the instability nor does it capture the nonlinear interactions. To overcome these limitations, weakly nonlinear technique is employed to characterize the system behavior close to the instability threshold and to derive approximate pattern solutions beyond the linear regime \cite{Ipsen2000, Topaz2010, Casal2011, KUMARI2021}. This framework provides a rigorous basis for understanding the transition from homogeneous states to complex spatio-temporal dynamics. The following sections develop this approach in the one-dimensional setting.

Let us now employ a standard multiscale expansion on both the field vector $\mathbf{U}$ and the bifurcation parameters $(\theta, d_{HB})$ with respect to a small amplitude parameter $0<\varepsilon\ll 1$ as follows

\begin{equation}
\begin{array}{l}
\widetilde{\mathbf{U}} = \varepsilon \widehat{\mathbf{U}}_1 + \varepsilon^2 \widehat{\mathbf{U}}_2 + \varepsilon^3 \widehat{\mathbf{U}}_3 + O(\varepsilon^4),\medskip\\
\theta = \theta_{hopf} + \varepsilon \theta_1 + \varepsilon^2 \theta_2 + \varepsilon^3 \theta_3 + O(\varepsilon^4),\medskip\\
d_{HB} = d_{HB}^T + \varepsilon d_{HB}^{(1)} + \varepsilon^2 d_{HB}^{(2)} + \varepsilon^3 d_{HB}^{(3)} + O(\varepsilon^4),
\end{array}
\label{eq:general_expansion}
\end{equation}
where $\widetilde{\mathbf{U}} = \mathbf{U}-\mathbf{U}_3$.
A hierarchy of slow time scales is also introduced to capture the evolution of the pattern amplitude

\begin{equation}
\partial_t \rightarrow \partial_t + \varepsilon \partial_{T_1} + \varepsilon^2 \partial_{T_2} + \varepsilon^3 \partial_{T_3} + O(\varepsilon^4),
\label{eq:time_scales}
\end{equation}
where $T_j$ ($j=1,2,3$) denote progressively slower temporal scales.

Expanding the nonlinear term $\mathbf{N}(\mathbf{U})$ around the homogeneous steady state $\mathbf{U}_3$ and separating linear, quadratic, and cubic contributions, system~\eqref{eq:modmat}--\eqref{eq:modmat_specifico} becomes

\begin{equation}
\widetilde{\mathbf{U}}_t = \mathcal{L}\,\widetilde{\mathbf{U}} + \frac{1}{2}\mathcal{Q}(\widetilde{\mathbf{U}},\widetilde{\mathbf{U}}) + \frac{1}{6}\mathcal{R}(\widetilde{\mathbf{U}},\widetilde{\mathbf{U}},\widetilde{\mathbf{U}}),
\label{eq:model_oper_hopf}
\end{equation}
being $\mathcal{L}=\mathcal{L}^\ast+\text{M}\partial_{xx}$, and the nonlinear operators $\mathcal{Q}$ and $\mathcal{R}$ are defined by

\begin{equation}
\mathcal{Q}(\mathbf{x},\mathbf{y})=(\mathbf{x}\cdot\nabla)(\mathbf{y}\cdot\nabla)\mathbf{N},\qquad
\mathcal{R}(\mathbf{x},\mathbf{y},\mathbf{z})=(\mathbf{x}\cdot\nabla)(\mathbf{y}\cdot\nabla)(\mathbf{z}\cdot\nabla)\mathbf{N}.
\label{eq:op_Q_R}
\end{equation}
Using expansions~\eqref{eq:general_expansion}-\eqref{eq:time_scales}, the operators involved in \eqref{eq:model_oper_hopf} can be expanded as

\begin{equation}
\begin{array}{l}
\mathcal{L}=\mathcal{L}_{TH}+\varepsilon \left( \theta_1 \partial_\theta \mathcal{L}^\ast + d_{HB}^{(1)} \partial_{d_{HB}} \text{M} \partial_{xx} \right)+\varepsilon^2\!\left(\theta_2 \partial_\theta \mathcal{L}^\ast +\frac{1}{2}\theta_1^2 \partial_{\theta \theta} \mathcal{L}^\ast + d_{HB}^{(2)}\partial_{d_{HB}} \text{M} \partial_{xx}\right)+ \medskip \\
\hspace{1cm} + \varepsilon^3\!\left(\theta_3 \partial_\theta \mathcal{L}^\ast + \theta_1\theta_2 \partial_{\theta \theta}\mathcal{L}^\ast + \frac{1}{6}\theta_1^3 \partial_{\theta \theta \theta} \mathcal{L}^\ast + d_{HB}^{(1)} \partial_{d_{HB}} \text{M} \partial_{xx} \right)+O(\varepsilon^4),\medskip\\
\mathcal{Q}(\widetilde{\mathbf{U}},\widetilde{\mathbf{U}})=\varepsilon^2\mathcal{Q}(\widehat{\mathbf{U}}_1,\widehat{\mathbf{U}}_1)+\varepsilon^3\!\left[2\mathcal{Q}(\widehat{\mathbf{U}}_1, \widehat{\mathbf{U}}_2)+\theta_1\mathcal{Q}_\theta(\widehat{\mathbf{U}}_1,\widehat{\mathbf{U}}_1)\right]+O(\varepsilon^4),\medskip\\
\mathcal{R}(\widetilde{\mathbf{U}},\widetilde{\mathbf{U}},\widetilde{\mathbf{U}})=\varepsilon^3\mathcal{R}(\widehat{\mathbf{U}}_1,\widehat{\mathbf{U}}_1,\widehat{\mathbf{U}}_1)+O(\varepsilon^4),
\end{array}
\label{eq:exp_op_hopf}
\end{equation}
with $\mathcal{L}_{TH} = \mathcal{L}^\ast|_{\theta_{hopf}}+\text{M}|_{d_{HB}^T}\partial_{xx}$.

Collecting terms of identical powers of $\varepsilon$ yields

\begin{equation}
\begin{array}{lll}
\left( \partial_t - \mathcal{L}_{TH} \right) \widehat{\mathbf{U}}_1 = 0 & & \text{order}\, \varepsilon, \bigskip\\
\left( \partial_t - \mathcal{L}_{TH} \right) \widehat{\mathbf{U}}_2 = \mathbf{F}_{TH} & & \text{order}\, \varepsilon^2, \bigskip\\
\left( \partial_t - \mathcal{L}_{TH} \right) \widehat{\mathbf{U}}_3 = \mathbf{G}_{TH} & & \text{order}\, \varepsilon^3 ,
\end{array}
\label{eq:ordini}
\end{equation}
with

\begin{equation}
\begin{aligned}
\mathbf{F}_{TH} = &\tfrac{1}{2}\mathcal{Q}(\widehat{\mathbf{U}}_1,\widehat{\mathbf{U}}_1) - \partial_{T_1} \widehat{\mathbf{U}}_1 + \theta_1\partial_\theta \mathcal{L}^\ast \widehat{\mathbf{U}}_1 + d_{HB}^{(1)} \partial_{d_{HB}} \text{M} \partial_{xx} \widehat{\mathbf{U}}_{1}, \\[6pt]
\mathbf{G}_{TH} = & \mathcal{Q}(\widehat{\mathbf{U}}_1,\widehat{\mathbf{U}}_2) + \tfrac{1}{6}\mathcal{R}(\widehat{\mathbf{U}}_1,\widehat{\mathbf{U}}_1,\widehat{\mathbf{U}}_1) - \partial_{T_1}\widehat{\mathbf{U}}_2 - \partial_{T_2}\widehat{\mathbf{U}}_1 + \theta_1 \partial_\theta \mathcal{L}^\ast \widehat{\mathbf{U}}_2 + \theta_2 \partial_\theta \mathcal{L}^\ast \widehat{\mathbf{U}}_1 + \tfrac{1}{2}\theta_1^2 \partial_{\theta \theta} \mathcal{L}^\ast \widehat{\mathbf{U}}_1 +  \\
&+ \tfrac{1}{2}\theta_1 \partial_\theta \mathcal{Q}(\widehat{\mathbf{U}}_1,\widehat{\mathbf{U}}_1) + d_{HB}^{(1)} \partial_{d_{HB}} \text{M} \partial_{xx} \widehat{\mathbf{U}}_{2} + d_{HB}^{(2)} \partial_{d_{HB}} \text{M} \partial_{xx} \widehat{\mathbf{U}}_{1}.
\end{aligned}
\label{eq:ordini_sor_TH}
\end{equation}

\noindent This general framework will now be specialized to the three scenarios: Hopf bifurcation (temporal oscillations), Turing bifurcation (stationary spatial patterns), and Turing--Hopf codimension-two bifurcation (mixed spatiotemporal structures).

\subsection{Hopf bifurcation and temporal oscillations}
\label{sec:oscillatory}

To characterize the weakly nonlinear behavior of oscillatory homogeneous structures, let us now focus on the emergence of a Hopf bifurcation. Since it involves spatially homogeneous oscillations, diffusion plays no role so that \mbox{$\mathcal{L}_{TH}=\mathcal{L}_H=\mathcal{L}^\ast|_{\theta_{hopf}}$} and $d_{HB}^{(1)} = d_{HB}^{(2)} = d_{HB}^{(3)} = 0$. Under this assumption, the dynamics reduce to temporal modulations governed by the hierarchy

\begin{equation}
\begin{array}{lll}
\left( \partial_t - \mathcal{L}_H \right)\widehat{\mathbf{U}}_1 = 0 & & \text{order } \varepsilon,\medskip\\
\left( \partial_t - \mathcal{L}_H \right)\widehat{\mathbf{U}}_2 = \mathbf{F}_H & & \text{order } \varepsilon^2,\medskip\\
\left( \partial_t - \mathcal{L}_H \right)\widehat{\mathbf{U}}_3 = \mathbf{G}_H & & \text{order } \varepsilon^3,
\end{array}
\label{eq:hopf_hierarchy}
\end{equation}
where $\mathbf{F}_H$ and $\mathbf{G}_H$, given by  \eqref{eq:ordini_sor_TH} with $d_{HB}^{(i)} = 0  \, (i = 1,2,3)$ constraints, collect the quadratic and cubic contributions associated with the Hopf bifurcation.
The leading-order solution is given by

\begin{equation}
\widehat{\mathbf{U}}_1=\Omega(T_1,T_2)\,\boldsymbol{\rho}_H e^{\text{i} \omega_H t} + c.c.,
\end{equation}
where $\Omega(T_1,T_2)$ is the complex slowly varying amplitude and $\boldsymbol{\rho}_H = \left[ - \theta_{hopf} \quad \text{i} \omega_H \right]^T$. Note that, higher-order corrections generate harmonics at $2\omega_H$ and $3\omega_H$. 

At second order, the Fredholm solvability condition $\langle \mathbf{F}_H,\boldsymbol{\psi}_H\rangle=0$, being $\boldsymbol{\psi}_H \in \text{Ker}\{\left( \partial_t - \mathcal{L}_H \right)^\dag\}$, enforces $\theta_1=0$ and $\Omega_{T_1}=0$, so that $\Omega$ varies only on the slower scale $T_2$. The correction reads

\begin{equation}
\widehat{\mathbf{U}}_2= |\Omega|^2 \mathbf{u}_{20} + \Omega^2 \mathbf{u}_{22} e^{2 \text{i} \omega_H t} + \overline{\Omega}^2 \overline{\mathbf{u}}_{22} e^{-2 \text{i} \omega_H t},
\end{equation}
where $\mathbf{u}_{20}$ and $\mathbf{u}_{22}$ solve the associated linear problems

\begin{equation}
\mathcal{L}_H \mathbf{u}_{20} = - \mathcal{Q}(\boldsymbol{\rho}_H, \overline{\boldsymbol{\rho}}_H), \qquad
\left( \mathcal{L}_H - 2 \text{i} \omega_H I \right) \mathbf{u}_{22} = - \frac{1}{2} \mathcal{Q}(\boldsymbol{\rho}_H, \boldsymbol{\rho}_H)
\label{eq:u2_definition}
\end{equation}
and ``$\overline{{\color{white}a}}$" denote the complex and conjugate.

Finally, from \eqref{eq:hopf_hierarchy}$_3$, the solvability condition $\langle \mathbf{G}_H,\boldsymbol{\psi}_H\rangle=0$ yields the real Stuart--Landau amplitude equation

\begin{equation}
\frac{\partial \Omega}{\partial T_2}=\sigma_H\Omega-L_H \Omega |\Omega|^2,
\label{eq:ST_hopf}
\end{equation}
where the coefficients $\sigma_H$ and $L_H$ are given by 

\begin{equation}
\sigma_H = \theta_2 \frac{\partial_\theta \mathcal{L}_H \boldsymbol{\rho}_H\cdot \overline{\boldsymbol{\psi}}_H}{\boldsymbol{\rho}_H\cdot\overline{\boldsymbol{\psi}}_H}, 
\qquad L_H = - \frac{\left[\mathcal{Q}(\boldsymbol{\rho}_H, \mathbf{u}_{20}) + \mathcal{Q}(\overline{\boldsymbol{\rho}}_H, \mathbf{u}_{22}) + \frac{1}{2} \mathcal{R} (\boldsymbol{\rho}_H, \boldsymbol{\rho}_H, \overline{\boldsymbol{\rho}}_H) \right]\cdot\overline{\boldsymbol{\psi}}_H}{\boldsymbol{\rho}_H\cdot\overline{\boldsymbol{\psi}}_H}
\end{equation}
and the sign of $\text{Re}\{L_H\}$ determines the bifurcation type. More precisely, for $\text{Re}\{L_H\}>0$, the Hopf bifurcation is supercritical and the amplitude saturates at

\begin{equation}
\Omega_{H_\infty}=\sqrt{\frac{\sigma_H}{L_H}},
\end{equation}
leading to the approximate solution

\begin{equation}
\mathbf{U}(t)=\mathbf{U}_3+\varepsilon \left[ \boldsymbol{\rho}_H \Omega_{H_\infty} e^{\text{i} \omega_H t}+ \overline{\boldsymbol{\rho}}_H \overline{\Omega}_{H_\infty} e^{-\text{i} \omega_H t}\right] + \varepsilon^2 \left[ |\Omega_{H_\infty}|^2 \mathbf{u}_{20} + \Omega_{H_\infty}^2 \mathbf{u}_{22} e^{2 \text{i} \omega_H t} + \overline{\Omega}_{H_\infty}^2 \overline{\mathbf{u}}_{22} e^{-2 \text{i} \omega_H t}\right]+O(\varepsilon^3).
\label{eq:approx}
\end{equation}
On the contrary, when $\text{Re}\{L_H\}<0$ the Hopf bifurcation is subcritical and the amplitude equation predicts an unstable branch near the threshold. In this case, small-amplitude oscillations cannot persist, and the system may exhibit bistability between the trivial state and large-amplitude limit cycles. To approximate these finite-amplitude solutions, the weakly nonlinear expansion must be extended beyond third order, incorporating quintic terms in the amplitude equation. This higher-order analysis captures the saturation mechanism and provides insight into hysteresis phenomena.

\begin{remark}[Ecological interpretation]
The sign of $\text{Re}\{L_H\}$ carries ecological meaning. A supercritical Hopf bifurcation ($\text{Re}\{L_H\}>0$) corresponds to smooth, small-amplitude cycles reflecting resilience, where vegetation and herbivore densities fluctuate predictably under self-regulating mechanisms. Conversely, a subcritical Hopf bifurcation ($\text{Re}\{L_H\}<0$) signals fragility: oscillations may arise abruptly with large amplitude, inducing bistability and ecological tipping points. Such dynamics amplify grazing pressure and vegetation stress, reducing ecosystem resilience and increasing susceptibility to disturbances.
\end{remark}

\subsection{Turing bifurcation and spatial heterogeneity}
\label{sec:periodic_patterns}

While the Hopf analysis addressed temporal oscillations, the Turing mechanism requires a complementary viewpoint. To capture the slow modulation of these structures, a weakly nonlinear multiple-scale expansion is employed, analogous in spirit to the Hopf case but tailored to spatial heterogeneity.

Since Turing instability involves spatially nonuniform perturbations without temporal oscillations, the general multiscale framework outlined in \eqref{eq:general_expansion}-\eqref{eq:time_scales} is specialized by neglecting the fast time derivative $\partial_t$ and $\theta_1 = \theta_2 = \theta_3 = 0$. Under these assumptions, the equations \eqref{eq:ordini} reduce to

\begin{equation}
\begin{array}{lll}
\mathcal{L}_T \widehat{\mathbf{U}}_1 = 0 & & \text{order } \varepsilon,\medskip\\
\mathcal{L}_T \widehat{\mathbf{U}}_2 = \mathbf{F}_T & & \text{order } \varepsilon^2,\medskip\\
\mathcal{L}_T \widehat{\mathbf{U}}_3 = \mathbf{G}_T & & \text{order } \varepsilon^3,
\end{array}
\label{eq:turing_hierarchy}
\end{equation}
being $\mathcal{L}_T = \mathcal{L}^\ast + \text{M}|_{d_{HB}^T}\partial_{xx}$, $\mathbf{F}_T$ and $\mathbf{G}_T$ given by \eqref{eq:ordini_sor_TH} with $\theta_i = 0 \, (i = 1,2,3)$. At leading order, the nullspace of $\mathcal{L}_T$ selects the excited wavenumber $\hat{k}_T$, yielding

\begin{equation}
\widehat{\mathbf{U}}_1 = \Gamma(T_1,T_2)\,\boldsymbol{\rho}_T e^{\text{i} \hat{k}_T x} + c.c.,
\label{eq:Turing_W1}
\end{equation}
where $\Gamma$ is the slowly varying amplitude and $\boldsymbol{\rho}_T = [ \theta \quad \theta H'(B_3)-\hat{k}_T^2 ]^T$ the eigenvector of $\mathcal{L}_T$. Applying the Fredholm alternative at second order enforces $\Gamma_{T_1}=0$ and $d_{HB}^{(1)}=0$, so that $\Gamma$ depends only on $T_2$. The correction reads

\begin{equation}
\widehat{\mathbf{U}}_2 = |\Gamma|^2 \mathbf{w}_{20} + \mathbf{w}_{22} \left(\Gamma^2 e^{2 \text{i} \hat{k}_T x} + \overline{\Gamma}^2 e^{-2 \text{i} \hat{k}_T x} \right),
\label{eq:Turing_W2}
\end{equation}
where $\mathbf{w}_{20}$ and $\mathbf{w}_{22}$ solve

\begin{equation}
\mathcal{L}^\ast \mathbf{w}_{20} = - \mathcal{Q}(\boldsymbol{\rho}_T,\boldsymbol{\rho}_T),\qquad
\left(\mathcal{L}^\ast-4\hat{k}_T^2\text{M}\vert_{d_{HB}^T}\right)\mathbf{w}_{22}=-\tfrac{1}{2}\mathcal{Q}(\boldsymbol{\rho}_T,\boldsymbol{\rho}_T).
\label{eq:w20_w22}
\end{equation}

At third order, the solvability condition $\langle \mathbf{G}_T,\boldsymbol{\psi}_T\rangle=0$, being $\boldsymbol{\psi}_T \in \text{Ker} \{{\mathcal{L}_T}^\dag \}$, yields the real Stuart--Landau amplitude equation

\begin{equation}
\frac{\partial \Gamma}{\partial T_2}=\sigma_T\Gamma-L_T\Gamma |\Gamma|^2,
\label{eq:Turing_SL}
\end{equation}
where

\begin{equation}
\sigma_T=-\hat{k}^2_Td_{HB}^{(2)}\frac{ \partial_{d_{HB}} \text{M} \boldsymbol{\rho}_T\cdot\boldsymbol{\psi}_T}{\boldsymbol{\rho}_T\cdot\boldsymbol{\psi}_T},\quad
L_T=-\frac{\left[\mathcal{Q}(\boldsymbol{\rho}_T,\mathbf{w}_{20}+\mathbf{w}_{22})+\frac{1}{2}\mathcal{R}(\boldsymbol{\rho}_T,\boldsymbol{\rho}_T,\boldsymbol{\rho}_T)\right]\cdot\boldsymbol{\psi}_T}{\boldsymbol{\rho}_T\cdot\boldsymbol{\psi}_T}.
\end{equation}

Since $\sigma_T>0$ once the onset of instability is crossed, the sign of $L_T$ dictates the bifurcation type. For $L_T > 0$, the branch emerges smoothly (supercritical), and the asymptotic amplitude is

\begin{equation}
\Gamma_{T_\infty}=\sqrt{\frac{\sigma_T}{L_T}},
\end{equation}
leading to the approximate stationary pattern

\begin{equation}
\mathbf{U}(x)=\mathbf{U}_3+\varepsilon \boldsymbol{\rho}_T \left[ \Gamma_{T_\infty} e^{ \text{i} \hat{k}_T x} + \overline{\Gamma}_{T_\infty} e^{-\text{i} \hat{k}_T x} \right] +\varepsilon^2 \left[ |\Gamma_{T_\infty}|^2 \mathbf{w}_{20}+ \mathbf{w}_{22}  \left(\Gamma_{T_\infty}^2 e^{2 \text{i} \hat{k}_T x} + \overline{\Gamma}_{T_\infty}^2 e^{-2 \text{i} \hat{k}_T x} \right)\right]+O(\varepsilon^3).
\end{equation}
Also in this case, when the bifurcation is subcritical ($L_T < 0$), the cubic approximation provided by the Stuart--Landau equation is no longer sufficient to capture the amplitude saturation. In such cases, higher-order terms must be included in the weakly nonlinear expansion, as they govern the emergence of finite-amplitude patterns and possible hysteresis phenomena.

\begin{remark}[Ecological implications of spatial heterogeneity]
Unlike temporal oscillations, Turing patterns directly affect landscape structure. Smooth, supercritical bifurcations generate moderate vegetation contrasts, promoting gradual adaptation and resilience. Conversely, subcritical transitions produce abrupt, large-amplitude segregation, often accompanied by hysteresis. Such sharp spatial discontinuities indicate weak ecosystems where small environmental changes can trigger sudden shifts in resource distribution and herbivore clustering.
\end{remark}

\subsection{Turing-Hopf bifurcation, temporal oscillations and spatial heterogeneity}
\label{sec:TH}

To characterize the weakly nonlinear behavior of mixed spatiotemporal structures emerging at a Turing--Hopf bifurcation, the analysis now focuses on the \emph{codimension-two point} in the parameter space $(\theta_{hopf}, d_{HB}^T)$. In this neighborhood, the bifurcation is crossed along the intersection of Hopf and Turing loci, and therefore both bifurcation parameters vary according to \eqref{eq:general_expansion}, so that neither effect is suppressed. Under these assumptions, the dynamics involve coupled temporal and spatial modulations governed by the systems \eqref{eq:ordini}.

At leading order, the nullspace of $\mathcal{L}_{TH}$ selects two critical modes, the Hopf frequency $\omega_H$ and the Turing wavenumber $\hat{k}_T$, so that the first-order solution reads

\begin{equation}
\widehat{\mathbf{U}}_1 = \Omega(T_1,T_2)\,\boldsymbol{\rho}_H e^{i\omega_H t} + \Gamma(T_1,T_2)\,\boldsymbol{\rho}_T e^{\text{i} \hat{k}_T x} + c.c.,
\label{eq:TH_U1}
\end{equation}
where $\Omega$ and $\Gamma$ are complex amplitudes associated with temporal oscillations and spatial patterns, respectively. Note that higher-order corrections introduce mixed harmonics which characterize spatiotemporal modulation.

At second order, the solvability conditions enforce $\Omega_{T_1}=\Gamma_{T_1}=\theta_1=d_{HB}^{(1)}=0$, so that both amplitudes evolve only on the slower scale $T_2$. The second-order correction reads

\begin{equation}
\begin{aligned}
\widehat{\mathbf{U}}_2= & |\Omega|^2 \mathbf{u}_{20} + \Omega^2 \mathbf{u}_{22} e^{2 \text{i} \omega_H t} + \overline{\Omega}^2 \overline{\mathbf{u}}_{22} e^{-2 \text{i} \omega_H t} + |\Gamma|^2 \mathbf{w}_{20} + \Gamma^2 \mathbf{w}_{22} e^{2 \text{i} \hat{k}_T x} + \overline{\Gamma}^2 \overline{\mathbf{w}}_{22} e^{-2 \text{i} \hat{k}_T x} + \medskip \\
& + \mathbf{z}_{20} \left\{ \Omega \Gamma e^{\text{i} \left( \omega_H t + \hat{k}_T x \right)} + \Omega \overline{\Gamma} e^{\text{i} \left( \omega_H t - \hat{k}_T x \right)} \right\}+ \overline{\mathbf{z}}_{20} \left\{ \overline{\Omega} \overline{\Gamma} e^{-\text{i} \left( \omega_H t + \hat{k}_T x \right)} + \overline{\Omega} \Gamma e^{-\text{i} \left( \omega_H t - \hat{k}_T x \right)} \right\},
\end{aligned}
\label{eq:TH_U2}
\end{equation}
with $\mathbf{u}_{20}$, $\mathbf{u}_{22}$, $\mathbf{w}_{20}$, $\mathbf{w}_{22}$ are defined by \eqref{eq:u2_definition}-\eqref{eq:w20_w22}, whereas $\mathbf{z}_{20}$ solve

\begin{equation}
\left[\mathcal{L}_H - \text{i} \omega_H I - \hat{k}_T^2 \text{M} \right] \mathbf{z}_{20} = - \mathcal{Q}(\boldsymbol{\rho}_H,\boldsymbol{\rho}_T).
\label{eq:w20_w22}
\end{equation}

Finally, at the third order, the Fredholm alternative solvability condition $\langle \mathbf{G}_{TH}, \psi_{TH} \rangle = 0$ yields the coupled Stuart--Landau amplitude equations

\begin{equation}
\begin{cases}
\frac{\partial \Omega}{\partial T_2} = \sigma_H \Omega - L_H \Omega |\Omega|^2 + \nu_1 \Omega |\Gamma|^2,\medskip \\
\frac{\partial \Gamma}{\partial T_2} = \sigma_{TH} \Gamma - L_T \Gamma |\Gamma|^2 + \nu_2 \Gamma |\Omega|^2,
\end{cases}
\label{eq:TH_SL}
\end{equation}
where 

\begin{equation}
\begin{array}{c}
\sigma_{TH} = \sigma_T + \theta_2 \frac{\partial_\theta \mathcal{L}^\ast \boldsymbol{\rho}_T\cdot\boldsymbol{\psi}_T}{\boldsymbol{\rho}_T\cdot\boldsymbol{\psi}_T}, \medskip \\
\begin{array}{l l}
\nu_1 = \frac{\left[ \mathcal{Q}(\boldsymbol{\rho}_H, \mathbf{w}_{20}) + 2 \mathcal{Q}(\boldsymbol{\rho}_T, \mathbf{z}_{20}) + \mathcal{R}(\boldsymbol{\rho}_H,\boldsymbol{\rho}_T,\boldsymbol{\rho}_T) \right]\cdot\overline{\boldsymbol{\psi}}_H}{\boldsymbol{\rho}_H\cdot\overline{\boldsymbol{\psi}}_H}, &
\nu_2 = \frac{\left[ \mathcal{Q}(\boldsymbol{\rho}_T, \mathbf{u}_{20}) + \mathcal{Q}(\boldsymbol{\rho}_H, \overline{\mathbf{z}}_{20}) + \mathcal{Q}(\overline{\boldsymbol{\rho}}_H, \mathbf{z}_{20}) + \mathcal{R}(\boldsymbol{\rho}_T,\boldsymbol{\rho}_H,\overline{\boldsymbol{\rho}}_H) \right]\cdot\boldsymbol{\psi}_T}{\boldsymbol{\rho}_T\cdot\boldsymbol{\psi}_T}.
\end{array}
\end{array}
\end{equation}
Note that, after the bifurcation threshold, $\text{Re}\{\sigma_H\},\sigma_T>0$ measure the distance from the Hopf and Turing thresholds while $\nu_1$ and $\nu_2$ quantify the cross-interaction between modes and govern the persistence of mixed spatiotemporal structures. The quantities $L_H$ and $L_T$, as well as $\nu_1$ and $\nu_2$, determine whether the bifurcations are supercritical or subcritical.

Finally, let us examine the stationary points of the amplitude equations~\eqref{eq:TH_SL} to provide insight into the possible long-term configurations of the emerging patterns. System~\eqref{eq:TH_SL} admits the following stationary configurations $(\Omega,\Gamma)$:
\begin{itemize}
    \item \textbf{Trivial state:} $(0,0)$, corresponding to the uniform equilibrium $\mathbf{U}_3$. 
    \newline \emph{Mathematically:} This state is stable when both growth rates $\text{Re}\{\sigma_H\},\sigma_{TH}$ are negative and loses stability as soon as one of them becomes positive, crossing the corresponding bifurcation threshold. In this regime, the system remains in its homogeneous configuration.
    \newline \emph{Ecologically:} No oscillations or spatial patterns occur; vegetation and herbivore densities stay uniform, reflecting a resilient regime far from bifurcation thresholds.
    
    \item \textbf{Single-mode branches:}
    \begin{itemize}
        \item Pure Hopf oscillations: $(\Omega_\infty,0)$, with $\Omega_\infty = \sqrt{\sigma_H / L_H}$ and $\Gamma = 0$. 
        \newline \emph{Mathematically:} This branch exists when $\text{Re}\{L_H\}>0$ so that temporal instability dominates while spatial perturbations decay. 
        \newline \emph{Ecologically:} The system exhibits uniform oscillations in time without spatial segregation, representing cyclic fluctuations in plant and herbivore densities.
        
        \item Pure Turing patterns: $(0,\Gamma_\infty)$, with $\Gamma_\infty = \sqrt{\sigma_{TH} / L_T}$ and $\Omega = 0$. 
        \newline \emph{Mathematically:} Feasible when $L_T >0$ so that spatial instability prevails while temporal oscillations are suppressed. 
        \newline \emph{Ecologically:} Stationary vegetation patches and herbivore aggregations emerge, reflecting heterogeneous landscapes under strong spatial feedback.
    \end{itemize}
    
    \item \textbf{Mixed-mode branches:} $(\Omega_{{TH}_\infty},\Gamma_{{TH}_\infty})$, where 
    
    \begin{equation}
    \Omega_{{TH}_\infty} = \sqrt{\frac{\sigma_H L_T + \sigma_{TH} \nu_1}{L_H L_T - \nu_1 \nu_2}}, \qquad
    \Gamma_{{TH}_\infty}= \sqrt{\frac{\sigma_{TH} L_H + \sigma_H \nu_2}{L_H L_T - \nu_1 \nu_2}}.    
    \end{equation}
    \emph{Mathematically:} This coexistence state arises near the codimension-two point and cross-interaction terms $\nu_1,\nu_2$ regulate amplitude saturation. 
    \newline \emph{Ecologically:} Temporal oscillations and spatial patterns coexist, producing complex spatiotemporal structures such as oscillating stripes or breathing patches, often associated with reduced ecosystem resilience.
\end{itemize}

\begin{remark}[Ecological interpretation]
Turing--Hopf dynamics correspond to ecosystems where vegetation and herbivore densities fluctuate both in space and time. Such mixed regimes often signal reduced resilience: patches periodically expand and contract while herbivore clusters oscillate, reflecting strong feedbacks between resource distribution, chemical defenses, and movement strategies.
\end{remark}

\subsection{Numerical investigations}

This section illustrates how numerical simulations corroborate the weakly nonlinear analysis developed in Sections \ref{sec:oscillatory}-\ref{sec:TH}. The goal is to verify the predictive capability of the amplitude equations near Hopf, Turing, and Turing--Hopf bifurcation thresholds. By exploring parameter regimes close to critical loci, the simulations reveal the emergence and modulation of coherent structures, providing a quantitative comparison between theoretical approximations and fully nonlinear dynamics.

Figure~\ref{fig:hopf_amp} provides an integrated view of the comparison between theoretical predictions and numerical results near the Hopf threshold. The analysis begins with panel~(a), which depicts the bifurcation diagram in the $(\theta,d_{HB})$-plane obtained from linear stability theory for the parameter set corresponding to Figure~\ref{fig:strong_regime_diff}(c). This diagram identifies the critical loci where instabilities arise; in particular, the gray dashed line marks $d_{HB}=50$, a reference value used for the continuation analysis discussed next. Starting from this setting, panel~(b) shows the numerical bifurcation diagram computed with the \texttt{pde2path} software \cite{Uecker2014}, where solid lines denote the stable branches and dashed lines the unstable ones. The black curve represents the conductive branch of stationary solutions, which loses stability at the first Hopf bifurcation. From this point, two distinct families of solutions emerge: the blue branch, originating from Hopf bifurcation, and the red branch, associated with Turing instability. Hopf points are highlighted by blue diamonds, Turing points by red circles, and other primary or secondary bifurcations by green squares, revealing the richness of the bifurcation structure beyond the instability. Panel~(c) complements this picture by presenting the theoretical bifurcation diagram derived from the weakly nonlinear analysis in Section~\ref{sec:oscillatory}. Unlike the numerical continuation, which captures multiple branches and secondary bifurcations, the analytical approach focuses exclusively on the first Hopf and Turing thresholds and the initial branch of periodic solutions. This contrast underscores the local nature of the asymptotic expansion and anticipates the discussion on its validity range. Such validity is quantified in panel~(d), which reports the normalized distance from the bifurcation threshold through the parameter $\varepsilon^2$. As expected, the approximation is highly accurate for $\varepsilon^2 \approx 0$, while deviations become significant for $\varepsilon^2 \sim 10^{-2}$ and grow as the system moves away from criticality \cite{Consolo2019, Consolo2024, Consolo2024II}. This observation provides a natural transition to panels~(e)--(h), where the quality of the approximation is assessed in detail. In particular, panels~(e)--(f) compare numerical and analytical solutions for the parameter set indicated by the dotted line in panels~(b)--(c), corresponding to a configuration close to the Hopf threshold. Here, the agreement is excellent, confirming the predictive capability of the Stuart--Landau equation in the weakly nonlinear regime. In contrast, panels~(g)--(h) refer to the dashed-dotted line in panels~(b)--(c), where the system lies farther from the bifurcation point. In this case, the approximation deteriorates, as higher-order terms neglected in the expansion become relevant.

\begin{figure}[t!]
\centering
\includegraphics[width=1\textwidth]{Figure11.eps}
\caption{(a) Bifurcation diagram in the $(\theta,d_{HB})$-plane for the parameter set of Figure~\ref{fig:strong_regime_diff}(c). 
(b)--(c) Numerical and theoretical bifurcation diagrams: solid lines denote stable branches, dashed lines unstable ones; blue and red curves correspond to Hopf and Turing branches, respectively; the black curve represents the steady state, and markers indicate primary and secondary bifurcations. 
(d) Normalized distance from the bifurcation threshold expressed by $\varepsilon^2$. 
(e)--(f) Numerical and analytical solutions for the parameter set indicated by the dotted line in panels (b)--(c), close to the Hopf threshold. 
(g)--(h) Numerical and analytical solutions for the parameter set indicated by the dashed-dotted line in panels (b)--(c), farther from the Hopf threshold.}
\label{fig:hopf_amp}
\end{figure}

Having clarified the behavior near the Hopf threshold, attention is now turned to the onset of spatial heterogeneity driven by Turing instability. The following analysis mirrors the structure adopted for the Hopf case, combining linear stability results, numerical continuation, and weakly nonlinear predictions to assess the validity of the amplitude equations and their ability to capture pattern morphology beyond the bifurcation point.
Figure~\ref{fig:turing_amp} illustrates the outcomes of this analysis for the Turing case. In detail, panel~(a) identifies the region of instability in the $(\theta,d_{HB})$-plane, while panels~(b)--(c) contrast the richness of the numerical bifurcation structure with the local character of the weakly nonlinear approximation. The normalized distance $\varepsilon^2$ in panel~(d) confirms that the asymptotic description remains accurate only in a narrow neighborhood of the threshold. This trend is reflected in panels~(e)--(f), where analytical and numerical patterns agree closely near criticality.

\begin{figure}[b!]
\centering
\includegraphics[width=1\textwidth]{Figure12.eps}
\caption{(a) Bifurcation diagram in the $(\theta,d_{HB})$-plane for the parameter set of Figure~\ref{fig:weak_regime_diff}(c). 
(b)--(c) Numerical and theoretical bifurcation diagrams. Lines are the same meaning as in Figure~\ref{fig:hopf_amp}. (d) Normalized distance from the bifurcation threshold expressed by $\varepsilon^2$. (e)--(f) Numerical and analytical solutions for the parameter set indicated by the dotted line in panels (b)--(c).}
\label{fig:turing_amp}
\end{figure}

Finally, the most intricate scenario, where temporal oscillations and spatial heterogeneity interact through a Turing-Hopf bifurcation, is addressed. Figure~\ref{fig:turinghopf_amp} summarizes the theoretical and numerical behaviour close to the Turing--Hopf codimension-two point and illustrates how the simultaneous loss of temporal and spatial stability enriches the bifurcation landscape beyond what occurs in the pure Hopf or pure Turing settings. Panel~(a) reports the instability region in the $(\theta,d_{HB})$-plane, highlighting the intersection of the Hopf and Turing loci and thus identifying the parameter configurations for which both critical modes become marginal. Panels~(b)--(c) compare the numerical continuation
results with the weakly nonlinear predictions obtained from the coupled Stuart--Landau system: as in the previous cases, the amplitude equations accurately capture the onset of instability and the initial deformation of the solution branches, while deviations arise once the dynamics move farther from criticality. The normalized distance $\varepsilon^2$ reported in panel~(d) quantifies the range of validity of the asymptotic description, confirming that the approximation remains reliable only in a narrow neighbourhood of the bifurcation point. This behaviour is further reflected in panels~(e)--(l), which display the spatiotemporal evolution of the solution for representative parameter values. In particular, the figure reveals three distinct dynamical regimes, associated with the excitation of one or both critical modes by the chosen initial perturbation. When the disturbance projects mainly onto the temporal eigenspace, the Hopf component dominates and the system develops spatially homogeneous oscillations converging to the single mode branch $(\Omega_{\infty}, 0)$, as shown in panels~(e)--(f). Conversely, if the initial condition excites primarily the spatial eigenmode, the dynamics converge toward stationary spatially periodic patterns characterized by the single mode branch $(0, \Gamma_{\infty})$, displayed in panels~(g)--(h). Finally, when both modes are significantly activated, the full Turing--Hopf interaction emerges, giving rise to mixed spatiotemporal structures characterised by temporal oscillations superimposed on spatial periodicity and converging to the mixed-mode branch $(\Omega_{{TH}_\infty}, \Gamma_{{TH}_\infty})$. These behaviours, illustrated in panels~(i)--(l), reflect the nonlinear coupling between the oscillatory and stationary components predicted by the amplitude equations. Moreover, Figure~\ref{fig:turinghopf_amp} highlights the capacity of the Turing--Hopf mechanism to generate multiple coexisting dynamical responses near criticality, with the specific outcome determined by the structure of the initial perturbation.

\begin{figure}[p!]
\centering
\includegraphics[width=1\textwidth]{Figure13.eps}
\caption{(a) Bifurcation diagram in the $(\theta,d_{HB})$-plane for the parameter set of Figure~\ref{fig:strong_regime_diff}(c), highlighting the occurence of a Turing--Hopf codimension-two point. (b)--(c) Numerical and theoretical bifurcation diagrams. Lines have the same meaning as in Figure~\ref{fig:hopf_amp}, with green branches corresponding to Turing-Hopf induced solutions. (d) Normalized distance from the bifurcation threshold expressed by $\varepsilon^2$, quantifying the
range of validity of the weakly nonlinear approximation. (e)--(f) Numerical and analytical solutions, for the parameter set indicated by the dotted line in panels~(b)--(c), illustrating the emergence of spatially homogeneous oscillations (Hopf-dominated). (g)--(h) Numerical and analytical solutions, for the parameter set indicated by the dashed line in panels~(b)--(c), showing stationary spatially periodic patterns (Turing-dominated). (i)--(l) Numerical and analytical solutions, for the parameter set indicated by the dotted line in panels~(b)--(c), for a mixed excitation of the two critical modes, illustrating fully developed Turing--Hopf spatiotemporal patterns.}
\label{fig:turinghopf_amp}
\end{figure}

These findings clearly define the range within which the weakly nonlinear approach remains reliable and emphasize its inherent limitations when the system moves beyond the primary bifurcation, where secondary branches and large-amplitude dynamics prevail.

\newpage 
\section{Conclusion}
\label{sec:disc}

This manuscript aims at understanding how toxin-mediated interactions and movement strategies influence the emergence of coherent structures in plant--herbivore systems. These structures, including stationary spatial patterns, temporal oscillations, and mixed spatiotemporal regimes, are not only mathematically intriguing but also ecologically significant, as they govern the organization and resilience of ecosystems. Therefore, identifying the conditions under which such configurations arise is essential for predicting abrupt transitions and assessing the stability of ecological communities under environmental stress.

To this goal, an extended reaction--diffusion model, that generalizes the classical toxin-dependent functional response by incorporating cross-diffusion effects, is introduced. This formulation captures realistic movement behaviors of herbivores and provides a framework for analyzing how chemical defenses and spatial redistribution interact to destabilize homogeneous states. The analysis firstly focuses on the characterization of homogeneous equilibria and their stability properties. In particular, by distinguishing between weak and strong toxicity regimes, Theorems \ref{th:weak}--\ref{th:strong} establish the conditions for the existence and feasibility of coexistence states, while Corollaries \ref{cr:hopf_weak}--\ref{cr:hopf_strong} identify Hopf bifurcation thresholds leading to oscillatory dynamics. These results help in clarifying how physiological traits, such as mortality and conversion efficiency, modulate the onset of temporal fluctuations.

The role of spatial processes is then examined through the study of diffusion-driven instabilities. The analysis reveals that cross-diffusion can trigger Turing bifurcations, giving rise to stationary vegetation patches and herbivore aggregations. Figures \ref{fig:weak_regime_diff} and \ref{fig:weak_regime_diff_sim} illustrate how small variations in mortality or movement intensity can shift the system from uniform equilibria to patterned landscapes, or even to mixed regimes where oscillations and spatial heterogeneity coexist. This interplay is further explored in Section \ref{sec:close}, where a weakly nonlinear approach is employed to derive amplitude equations near bifurcation thresholds. In this context, the Stuart--Landau framework provides a rigorous description of pattern modulation and highlighted the role of nonlinear saturation and cross-interaction terms in shaping the morphology of emerging structures.

From a broader perspective, it is shown that toxin-mediated feedbacks and movement strategies act as key drivers of complexity in ecological systems. The identification of codimension-two points, where Hopf and Turing instabilities interact, underscores the potential for rich spatiotemporal dynamics, including oscillating stripes and breathing patches. Such configurations reflect ecosystems operating near criticality, where resilience is reduced and small environmental changes may trigger large-scale reorganizations.

An important outcome of this analysis concerns the conditions under which spatial pattern formation can arise in the proposed vegetation-herbivore framework. In particular, it has been shown that nontrivial spatial and spatiotemporal patterns emerge only in the presence of repulsive vegetation effects. While this mechanism plays a crucial role in triggering self-organization, it highlights an intrinsic limitation of the current two-species model: attractive vegetation dynamics do not lead to pattern formation within this setting. From an ecological perspective, this finding suggests that additional mechanisms are required to reproduce richer spatial behaviors. In this direction, the inclusion of further ecological components, such as a third interacting species or additional feedback mechanisms, appears as a natural and promising extension. These generalizations would allow the exploration of more complex scenarios that are closer to the experimental observation.

Future research should address several directions. Extending the analysis to two-dimensional domains will allow the exploration of more intricate spatial morphologies, such as hexagonal lattices or labyrinthine structures, which are commonly observed in semi-arid landscapes \cite{Consolo2025}. A rigorous investigation of coherent structures far from bifurcation thresholds, possibly through geometric singular perturbation theory (GSPT), would provide insight into global dynamics and multi-scale interactions beyond the weakly nonlinear regime \cite{Grifo2025IV}. Finally, the study of secondary bifurcations, including period-doubling cascades and routes to chaos, represents a challenging but promising avenue for understanding complex behaviors and unpredictability in ecological systems.

\section*{Acknowledgments}
This research was funded by MUR (Italian Ministry of University and Research) PNRR - Missione 4, Componente 2, Investimento 1.1 - Bando Prin 2022 PNRR - Decreto Direttoriale No. 1409 del 14-09-2022 through PRIN2022-PNRR Project No. P2022WC2ZZ ``A multidisciplinary approach to evaluate ecosystems resilience under climate change" CUP J53D23015990001 and by INdAM-GNFM. GG is a post-doc fellow supported by the National Institute of Advanced Mathematics (INdAM) founded by the European Union - NEXTGENERATIONEU CUP E63C25000470007.

\section*{Data availability}
No data was used for the research described in the article.

\bibliographystyle{unsrtnat}
\bibliography{bibliography}

\end{document}